%\documentclass[letterpaper,12pt]{article}

%genetics version
\documentclass[9pt,twocolumn,twoside %,lineno
]{gsajnl}
\usepackage{epstopdf}
\articletype{inv} % article type
% {inv} Investigation
% {gs} Genomic Selection
% {goi} Genetics of Immunity
% {gos} Genetics of Sex
% {mp} Multiparental Populations
\runningtitle{Evaluation of  model fit } % For use in the footer
\runningauthor{van Waaij \textit{et al.}}

\title{Evaluation of population structure inferred by principal component analysis or the admixture model}

\author[1,3,$\dagger$]{Jan van Waaij}
\author[1,$\dagger$]{Song Li}
\author[2]{Gen\'{i}s Garcia-Erill}
\author[2]{Anders Albrechtsen}
\author[1,$\ast$]{Carsten Wiuf}

\affil[1]{Department of Mathematical Science, University of Copenhagen, 2100 Copenhagen, Denmark}
\affil[2]{Department of Biology, University of Copenhagen, 2100 Copenhagen, Denmark}
%\affil[3]{Author three affiliation}
%\affil[4]{Author four affiliation}
\affil[3]{Current address: Department of Health Technology, Danish Technical University, 2800 Kgs. Lyngby, Denmark}
\affil[$\dagger$]{These authors contributed equally to this work}

% Use the \equalcontrib command to mark authors with equal
% contributions, using the relevant superscript numbers
%\equalcontrib{$\dagger$}
%\equalcontrib{2}

\correspondingauthoraffiliation[$\ast$]{Corresponding author: Department of Mathematical Sciences, Universitetsparken 5, 2100 Copenhagen, Denmark. Email: wiuf@math.ku.dk}
%Please insert the affiliation correspondence address and email for the corresponding author. The corresponding author should be marked with the relevant number in the author list, as shown in the example.}

\begin{abstract}
Principal component analysis (PCA) is commonly used in genetics to infer and visualize population structure and admixture between populations. PCA is often interpreted in a way similar to inferred admixture proportions, where it is assumed that individuals belong to one of several possible populations or are admixed between these populations. We propose a new method to assess the statistical fit of PCA (interpreted as a model spanned by the top principal components) and to show that violations of the PCA assumptions affect the fit. Our method uses the chosen top principal components to predict the genotypes. By assessing the covariance (and the correlation) of the residuals (the differences between observed and predicted genotypes), we are able to detect violation of the model assumptions. Based on simulations and genome wide human data we show that our assessment of fit can be used to guide the interpretation of the data and to pinpoint individuals that are not well represented by the chosen principal components. Our method works equally on other similar models, such as the admixture model, where the mean of the data is represented by linear matrix decomposition.

%Please see additional guidelines notes on preparing your abstract below.
\end{abstract}

%The abstract should be written for people who may not read the entire paper, so it must stand on its own. The impression it makes usually determines whether the reader will go on to read the article, so the abstract must be engaging, clear, and concise. In addition, the abstract may be the only part of the article that is indexed in databases, so it must accurately reflect the content of the article. A well-written abstract is the  most effective way to reach intended readers, leading to more robust search, retrieval, and usage of the article.

%In addition to the guidelines provided in the example abstract above, your abstract should:

%\begin{itemize}
%\item provide a synopsis of the entire article;
%\item begin with the broad context of the study, followed by specific background for the study;
%\item describe the purpose, methods and procedures, core findings and results, and conclusions of the study;
%\item emphasize new or important aspects of the research;
%\item engage the broad readership of GENETICS and be understandable to a diverse audience (avoid using jargon);
%\item be a single paragraph of less than 250 words;
%\item contain the full name of the organism studied;
%\item NOT contain citations or abbreviations.
%\end{itemize}

\keywords{PCA; residuals; population modelling; ancient DNA; statistical fit}

\dates{\rec{xx xx, xxxx} \acc{xx xx, xxxx}}

%% This is the recommended preamble for your document.

%% Load De Gruyter specific settings
%\usepackage{dgjournal}

%% The mathptmx package is recommended for Times compatible math symbols.
%% Use mtpro2 or mathtime instead of mathptmx if you have the commercially
%% available MathTime fonts.
%% Other options are txfonts (free) or belleek (free) or TM-Math (commercial)
\usepackage{mathptmx}

%% Use the graphics package to include figures
\usepackage{graphics}
%\usepackage[pdftex]{graphicx}
%\graphicspath{{/Simulations/paperplot}}
\usepackage{wrapfig}
\usepackage{graphicx}
\usepackage{float}
\usepackage{subfigure}
\usepackage{overpic}
\usepackage{cuted}
\usepackage{extarrows}

\usepackage{epsfig}

\usepackage{caption}

%% Use natbib 0with these recommended options
%\usepackage[authoryear,comma,longnamesfirst,sectionbib]{natbib}

%%% OUR OWN PACKAGES
%\usepackage[UKenglish]{babel}
%\usepackage{authblk}

\usepackage{amssymb,amsfonts,amsmath,amsthm}
\usepackage{xcolor,tikz,pgf}

\usepackage{booktabs}
\usepackage{mathtools}
\usepackage{soul}
\usepackage{tablefootnote} % for table footnotes

\usepackage{bm}  % Define \bm{} to use bold math fonts
\usepackage{bbm}  % Define \bm{} to use bold math fonts
\usepackage{thm-restate}
\usepackage{fancyhdr}
\usepackage{float}
\usepackage{hyperref}

\usepackage[utf8]{inputenc}
%\pagestyle{plain} 

%\usepackage[linesnumbered,lined,boxed,commentsnumbered]{algorithm2e}

%%% Infrastructure    
\makeatletter
\newcommand{\refcheckize}[1]{%
  \expandafter\let\csname @@\string#1\endcsname#1%
  \expandafter\DeclareRobustCommand\csname relax\string#1\endcsname[1]{%
    \csname @@\string#1\endcsname{##1}\wrtusdrf{##1}}%
  \expandafter\let\expandafter#1\csname relax\string#1\endcsname
}
\makeatother
%%%

% \usepackage{refcheck}

% \refcheckize{\cref}
% \refcheckize{\Cref}

\newcommand*\xbar[1]{%
  \hbox{%
    \vbox{%
      \hrule height 0.5pt % The actual bar
      \kern0.3ex%         % Distance between bar and symbol
      \hbox{%
        \kern-0.1em%      % Shortening on the left side
        \ensuremath{#1}%
        \kern 0em%      % Shortening on the right side
      }%
    }%
  }%
}

\newtheorem{theorem}{Theorem}

\newtheorem{lemma}[theorem]{Lemma}

\DeclareMathOperator{\R}{\mathbb{R}}
\DeclareMathOperator{\re}{\mathbb{R}}
\DeclareMathOperator{\E}{\mathbb{E}}

% brackets
\newcommand{\rh}[1]{\left(#1\right)}
\newcommand{\vh}[1]{\left[ #1 \right]}

\newcommand{\inpr}[1]{\left\langle#1\right\rangle}

% Probability operators
\DeclareMathOperator{\cov}{cov}

\DeclareMathOperator{\var}{var}

\DeclareMathOperator{\x}{\times}
\DeclareMathOperator{\diag}{diag}

% Comments about math. 

\DeclareMathOperator{\sdot}{\,\cdot\,}

\DeclareMathOperator{\binomial}{binomial}
\renewcommand{\sp}{span}

\usepackage[colorinlistoftodos]{todonotes}
% The following removes a warning for the todonotes package.
\setlength {\marginparwidth }{2cm}

%% Start your document body here
\begin{document}
%\pagenumbering{arabic}
%% Do NOT include any frontmatter information; including the title, author names,
%% institutes, acknowledgments and title footnotes (author information, funding
%% sources, etc.). Start the document with the first section or paragraph of
%% the article.

%genetics version
\maketitle
%\thispagestyle{firststyle}
%\slugnote
%\firstpagefootnote
\vspace{-13pt}% Only used for adjusting extra space in the left column of the first page

\section{Introduction}

Principal component analysis (PCA) and model-based clustering methods are popular ways to disentangle the ancestral genetic history of individuals and populations. One particular model, the admixture model \citep{Pritchardea2000}, has played a prominent role because of its simple structure and, in some cases, easy interpretability. PCA is often seen as being model free but as noted by \cite{engelhardtstephens2010}, the two approaches are very similar. The interpretation of the results of a PCA analysis is often  based on  assumptions similar to those of the admixture model, such that admixed individuals are linear combinations of the eigenvectors  representing unadmixed individuals. In this way, the admixed individuals lie in-between the unadmixed individuals in a PCA plot. As shown for the admixture model, there are many demographic histories that can lead to the same result \citep{LawsonvanDorpFalush2018} and many demographic histories that violate the assumptions of the admixture model \citep{genisanders2020}. As we will show, this is also the case for PCA, since it has a similar underlying model \citep{engelhardtstephens2010}.

The admixture model states that the genetic material from each individual is composed of contributions from $k$ distinct ancestral homogeneous populations. However, this is often contested in real data analysis, where the ancestral population structure might be much more complicated than that specified by the admixture model. For example, the $k$ ancestral populations might be heterogeneous themselves, the exact number of ancestral populations might be difficult to assess due to many smaller contributing populations, or the genetic composition of an individual might be the result of continuous migration or recent backcrossing, which also violates the assumptions of the admixture model. Furthermore, the admixture model assumes individuals are unrelated, which naturally might not be the case.   This paper is concerned with assessing the fit of PCA building on the special relationship with the admixture model \citep{engelhardtstephens2010}. In particular, we are interested in quantifying the model fit and assessing the validity of the model at the level of the sample as well as at the level of the individual. Using real and simulated data we show that the fit from a PCA analysis is affected by violations of the admixture model.

We consider genotype data $G$ from $n$ individuals and $m$ SNPs, such that $G_{si}\in\{0,1,2\}$ is the number of reference alleles for individual $i$ and SNP $s$. Typically, $G_{si}$ is assumed to be binomially distributed with parameter $\Pi_{si}$, where $\Pi_{si}$ depends on the number of ancestral populations, $k$, their admixture proportions and the ancestral population allele frequencies. For clustering based analysis such as ADMIXTURE \citep{alexander-lange}, $k$ is the number of clusters while in  PCA, it is the $k-1$ top principal components.  We give the specifics of the admixture model in the next section and show its relationship to PCA in the Material and methods section. 
%\cref{subsec:model}.  
 
Several methods aim to estimate the best $k$ in some sense \citep{alexander-lange,evanno,Pritchardea2000,raj,wang2019}, but finding such $k$ does not imply the data fit the model   \citep{lawson,janes}.
In statistics, it is standard  to use residuals and distributional summaries of the residuals to assess    model fit  \citep{box2005}. The residual of an observation is defined as the difference between the observed and the predicted value (estimated under some model).
Visual trends in the residuals (for example, differences between populations) are indicative of model misfit, and large absolute values of the residuals are indicative of  outliers (for example due to experimental errors, or kinship). If the model is correct, a histogram of the residuals is expected to be mono-modal centered around zero  \citep{box2005}. 

In our context, \cite{genisanders2020}   argue  that   trends in the residual correlation matrix   carries information about the underlying model and might be used for visual model evaluation.  A method is designed to  assess whether the correlation structure agrees with the proposed model, in particular, whether it agrees with the proposed number of homogeneous ancestral populations    \citep{genisanders2020}. However, even in the case the model is correctly specified,   the residuals are in general correlated \citep{box2005}, and therefore,  trends might be observed even if the model is true, leading to incorrect model assessment. To adjust for this correlation,  a leave-one-out procedure, based on maximum likelihood estimation of the admixture model parameters, is developed that removes the correlation between residuals in the case the model is correct, but not if the model is misspecified \citep{genisanders2020}. This approach could also be applied to PCA, where %missing genotypes could be introduced and
expected genotypes could be calculated using probabilistic PCA \citep{meisner2021}. 
This leave-one-out procedure is, however, computationally expensive. 

To remedy the computational difficulties, we take a different approach  to investigate the correlation structure.
We suggest two different ways of calculating the correlation  matrix of the residuals. The first is simply the empirical correlation matrix of the residuals. The second might be considered an estimated correlation matrix, based on a model. Both are simple  to compute. Under mild regularity assumptions, these two measures agree if the model is correct and the number of SNPs is large. Hence, their difference is expected to be close to zero, when the admixture model is not violated. If the difference is considerably different from zero, then this is proof of model misfit.

To explore the adequacy of the proposed method, we investigate different ways to calculate the predicted values of the genotype (hence, the residuals), using Principal Component Analysis (PCA) in different ways. However, we also show that this approach can be used on   estimated admixture proportions.  
 Specifically, we use  1) an uncommon but very useful PCA approach (here, named PCA 1) based on  unnormalized genotypes  \citep{CabrerosStorey2019,chenstorey2015}, 2)  PCA applied to  mean centred data (PCA 2), see \cite{Patterson2006}, and 3) PCA applied to mean and variance normalised data (PCA 3) \citep{Patterson2006}.   All three approaches are computationally fast and do  not require separate estimation of ancestral allele frequencies and population proportions, as in \cite{genisanders2020}. Hence, the computation of the residuals are computationally inexpensive. Additionally, we show that this approach can also be applied to output from, for example, the software  ADMIXTURE  \citep{Alexander2009} to estimate $\Pi_{si}$ for each $s$ and $i$, and to calculate the residuals from these estimates. An overview of PCA can be found in \cite{Jolliffe2022}.

We demonstrate  that our proposed method works well on simulated and real data, when the predicted values (and the residuals) are calculated in any of the four mentioned ways. Furthermore, we back this up mathematically by showing that the two correlation measures agree (if the number of SNPs is large) under the correct admixture model for PCA 1 and PCA 2. 
For the latter, a  few additional assumptions are required.  The estimated covariance (and correlation coefficient) under the proposed model might be seen as a correction term  for population structure. Subtracting it from the empirical covariance, thus gives a covariance estimate with baseline zero under the correct model, independent of the population structure. It is natural to suspect that similar can be done in models with population structure and kinship, which we will pursue in a subsequent study.

  In  the next section, %\ref{sec:mm}, 
  we describe the model, the statistical approach to compute the residuals, and how we evaluate model fit. In addition, we give mathematical statements that show how the method performs theoretically. In the `Results' section, %\Cref{subsec:sim}, 
  we provide analysis of simulated and real data, respectively.   We end  with a discussion. % in \Cref{sec:discussion}.   
  Mathematical proofs are collected in the appendix.

\section{Materials and methods}%\label{sec:mm}

\subsection{Notation}

For an $\ell_1\times\ell_2$ matrix $A=(A_{ij})_{i,j}$, $A_{\star i}$ denotes the $i$-th column of $A$,  $A_{i \star }$ the $i$-th row, $A^T$ the transpose matrix, and  $\text{rank}(A)$ the rank.
The   Frobenius norm of a square $\ell\times\ell$ matrix $A$ is   
$$\|A\|_F=\sqrt{\sum_{i=1}^\ell\sum_{j=1}^\ell A_{ij}^2}.$$
A square matrix $A$ is an orthogonal projection if $A^2=A$ and $A^T=A$. A symmetric matrix has $n$ real eigenvalues (with multiplicity) and the eigenvectors can be chosen such that they are orthogonal to each other. If the matrix is positive (semi-)definite, then the eigenvalues are positive (non-negative).

For a random variable/vector/matrix  $X$,   its expectation  is denoted   $\E[X]$  (provided it exist).  The variance of a random variable $X$ is denoted $\var(X)$, and covariance  between two random variables 	$X,Y$ is denoted $\cov(X,Y)$ (provided they  exist). Similarly, for a random vector $X=(X_1,\ldots,X_n)$,  the covariance matrix is denoted $\cov(X)$.    For a sequence $X_m$, $m=0,\ldots,$ of random variables/vectors/matrices, if $X_m\to X_0$ as $m\to\infty$ almost surely (convergence for all realisations but a set of zero probability),  we leave out `almost surely'  and write $X_m\to X_0$  as $m\to\infty$ for convenience.

\subsection{The PCA and the admixture model}\label{subsec:model}

We consider a model  with genotype observations from $n$ individuals, and $m$ biallelic sites (SNPs), where $m$ is assumed to be (much) larger than $n$, $m\ge n$. The genotype
$G_{si}$ of SNP $s$ in individual $i$ is assumed to be a binomial random variable
\begin{equation*}\label{eq:model}
G_{si}\sim \binomial(2,\Pi_{si}). 
\end{equation*}
In matrix notation, we have  $G\sim\binomial(2,\Pi)$ with expectation $\E( G\mid \Pi)  = 2\Pi $, where $G$ and $\Pi$ are $m\times n$ dimensional matrices. Conditional on $\Pi$, we assume the entries of $G$ are independent random variables.  

Furthermore, we assume the matrix $\Pi$ takes the form $\Pi=FQ$, where $Q$ is a (possibly unconstrained) $k\x n$ matrix of rank $k\le n$, and $F$ is a (possibly unconstrained) 
$m\times k$ matrix, also of rank $k$ (implying $\Pi$ likewise is of rank $k$, Lemma \ref{lem:rankk}). Entry-wise, this amounts to
\[\Pi_{si}=(FQ)_{si}=\sum_{k=1}^kF_{sk}Q_{ki},\quad s=1,\dots,n,\quad i=1,\ldots,m. \]
For the binomial assumption to make sense, we must  require the entries of $\Pi$   to be between zero and one.

In the literature, this model is typically encountered in the form of an admixture model
with $k$ ancestral populations, see for example, \cite{Pritchardea2000,genisanders2020}. The general unconstrained setting which applies to PCA has also been  discussed \citep{CabrerosStorey2019}.  In the case of an admixture model, $Q$ is a matrix of ancestral admixture proportions, such that the proportion of   individual $i$'s genome originating from population $j$ is $Q_{ji}$.   Furthermore, $F$ is a matrix of ancestral SNP frequencies, such that the frequency of the reference allele of SNP $s$ in population $j$ is $F_{sj}$.  In many applications, the columns of $Q$ sum to one.

While we lean towards an interpretation  in terms of  ancestral population proportions and SNP frequencies, our approach does not enforce or assume the columns of $Q$ (the admixture proportions)  to sum to one, but  allow these to be unconstrained.  This is  advantageous for at least two reasons. First, a proposed model might only contain the major ancestral populations, leaving out older or lesser defined populations. Hence, the sum of ancestral proportions might    be smaller than one. Secondly,  when fitting a model with fewer ancestral populations than the true model, one should only require the admixture proportions to sum to at most one.

\subsection{The residuals}%\label{sec:residualmethod}
 
Our goal is to design a strategy to assess the hypothesis that   $\Pi$ is a product of two matrices.
As we do not know  the  true $k$, we suggest a number $k'$ of ancestral populations and estimate the model parameters under this constraint. That is, we assume a model of the form
\begin{equation*}%\label{eq:model_k}
G\sim\binomial(2,\Pi_{k'}),\quad \Pi_{k'}=F_{k'}Q_{k'}, 
\end{equation*} 
where  each entry of $G$ follows a binomial distribution. $ Q_{k'}$ has dimension $k'\x n$,     $F_{k'}$  has dimension  $m\x k'$, and $\text{rank}(Q_{k'})=\text{rank}(F_{k'})=k'$, hence also $\text{rank}(\Pi_{k'})=k'$.   Throughout, we use the index $k'$ to indicate the imposed rank condition, and assume $k'\le k$ unless otherwise stated. The latter assumption is only to guarantee the mathematical validity of certain statements, and is not required for practical use of the method.

Our approach is build on the residuals, the difference between observed and predicted data. To define the residuals, we let  $P\colon \R^n\to\R^n$ be the orthogonal projection onto the $k$-dimensional subspace spanned by the $k$ rows of (the true) $Q$,  hence $P=Q^T(QQ^T)^{-1}Q$, and  $QP=Q$. Let $\widehat P_{k'}$ be an estimate of   $P$ based on the data $G$, and assume $\widehat P_{k'}$  is an orthogonal projection onto a $k'$-dimensional subspace. 
%In \Cref{subsec:estPi},
Later in this section, we  show how an estimate $\widehat P_{k'}$ can be obtained from an  estimate of $Q_{k'}$ or an  estimate of $\Pi_{k'}$. Estimates of these parameters might be obtained using existing methods, based on for example, maximum likelihood analysis \citep{wang2003,Alexander2009,genisanders2020}. Furthermore, for the three PCA approaches, an estimate of the projection  matrix can simply be obtained from   eigenvectors of a singular value decomposition (SVD) of the data matrix. %, see  \Cref{subsec:estPi}.

We define the $m\times n$ matrix of residuals by
$$R_{k'} = G-2\widehat{\Pi} =G(I-\widehat P_{k'}),$$
where $G$ is the observed data and $G\widehat P_{k'}$, the predicted values. The latter might also be considered an  estimate of $2\Pi$, the expected value of $G$.
This definition of residuals is in line with how the residuals are defined in a multilinear regression model  as the difference between the observed data (here, $G$) and the projection of the data onto the subspace spanned by the regressors (here, $G\widehat P_{k'}$). The   essential difference being that  in a multilinear regression model, the regressors  are known and does not depend on the observed data, while   $\widehat P_{k'}$ is   estimated from the data. 

 We  assess the model fit by studying the correlation matrix of the residuals  in two ways.
First, we consider the \emph{empirical covariance matrix } $\widehat B$ with entries
\begin{align*}
\widehat B_{ij}&=\frac 1{m-1}\sum_{s=1}^m (R_{k',si}-{\xbar R}_{k',i} )(R_{k',sj}-{\xbar R}_{k',j})\\
&=\frac 1{m-1}\sum_{s=1}^m (R_{k',si}R_{k',sj}-{\xbar R}_{k',i}\ {\xbar R}_{k',j}),
\end{align*}
 where \[
{\xbar R}_{k'i} = \frac1m \sum_{s=1}^m R_{k',si},
\]
and the corresponding \emph{empirical correlation matrix} with entries
\[
\widehat b_{ij}=\frac{\widehat B_{ij}}{\sqrt{\widehat B_{ii}\widehat B_{jj}}},
\]
$i,j=1,\ldots,n$.
Secondly,  we consider the \emph{estimated covariance matrix} 
 \[
\widehat C = (I-\widehat P_{k'})\widehat D(I-\widehat P_{k'})
 \]
with corresponding  \emph{estimated correlation matrix},
 \begin{align*}
\widehat c_{ij} = \frac{\widehat C_{ij}}{\sqrt{\widehat C_{ii}\widehat C_{jj}}},
\end{align*}
$i,j=1,\ldots,n$.
Here, $\widehat D$ is the $n\times n$ diagonal matrix containing the  average heterozygosities of each individual,
 \begin{equation*}
\label{eq:hatD}
\widehat D_{ii}= \frac1m\sum_{s=1}^m G_{si}(2-G_{si}), \quad i=1,\ldots, n.
\end{equation*}

Under reasonable regularity conditions, we can quantify the behaviour of $\widehat B$ and $\widehat C$ as the number of SNPs become large. Specifically, we assume the rows of $F$ are independent and identically distributed with distribution  $\text{Dist}(\mu, \Sigma)$, where $\mu$ denote the $k$-dimensional mean vector of the distribution, and $\Sigma$ the $k\x k$-covariance matrix, that is,
\begin{align*}
    F_{s\star}=(F_{s1},\ldots,F_{sk})\,  &\stackrel{\text{iid}}\sim  \,\text{Dist}(\mu,\Sigma),  
\end{align*}
$s=1,\ldots,m$.
The matrix $Q$ is assumed to be non-random, that is, fixed.   These assumptions are standard and typically used in simulation of genetic data, see for example, \cite{PickrellPritchard2012,CabrerosStorey2019,genisanders2020}.  Often $\text{dist}(\mu, \Sigma)$ is taken to be the product of $k$ independent uniform distributions in which case $\mu=0.5(1 ,1 ,\ldots ,1)$ and $\Sigma$ is a diagonal matrix with entries $1/12$, though other choices have been applied, see for example \citet{Balding1995,Conomosea2016}. 

Let $D$ be the diagonal matrix   with entries
\begin{equation}\label{eq:Dvar}
D_{ii}=2\E[\Pi_{si}(1-\Pi_{si})],\quad i=1,\ldots,n.
\end{equation}
It follows from Lemma~\ref{thm:unbiasedD} in the appendix, %$\Cref{sec:math}, 
that $\widehat D $ converges 
 to $D$ as $m\to \infty$. Furthermore, as $D_{ii}$ is the variance of $G_{si}$ (it is binomial), then $\widehat D_{ii}$ might be considered an estimate of this variance.
The proofs of the statements are in the appendix. %\Cref{sec:math}.

\begin{theorem}\label{thm:BC}
 Let $k'\le k$. Under the given assumptions, suppose further that $\widehat P_{k'}\to P_{k'}$ as $m\to \infty$, for some   matrix $P_{k'}$. Then, $P_{k'}$ is an orthogonal projection. 
Furthermore,  the following holds,
\begin{align*}
\widehat B  &\,\to \,  (I-P_{k'})(D+4 Q^T\Sigma Q)(I-P_{k'}), \\
\widehat C &\,\to\, (I-  P_{k'})  D(I- P_{k'}),
\end{align*}
  as $m\to\infty$.  Hence, also
  \begin{align*}
\widehat B\,-\,\widehat C\,\, &\to\,\,  4(I-  P_{k'})Q^T\Sigma Q(I-  P_{k'}) \\
&\,\,=\,\,4(P-  P_{k'})Q^T\Sigma Q(P-  P_{k'}),
\end{align*}
 as $m\to\infty$.  For $k'=k$, if $P_k=  P$, then the right hand side is the zero matrix, whereas this is not the case in general for $k'<k$.
\end{theorem}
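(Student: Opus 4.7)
The plan is to handle the three stated limits in sequence, with the main analytical work concentrated on $\widehat B$. I would begin by deducing that $P_{k'}$ is an orthogonal projection from the convergence hypothesis alone: since $\widehat P_{k'}^2 = \widehat P_{k'}$ and $\widehat P_{k'}^T = \widehat P_{k'}$ for every $m$, and matrix multiplication and transposition are both continuous, passing to the limit yields $P_{k'}^2 = P_{k'}$ and $P_{k'}^T = P_{k'}$.

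For $\widehat C$, Lemma~\ref{thm:unbiasedD} already supplies $\widehat D \to D$, and combining this with $\widehat P_{k'} \to P_{k'}$ via the continuous mapping $(A,B,C)\mapsto ABC$ gives $\widehat C \to (I - P_{k'}) D (I - P_{k'})$ directly.

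The substantive step is $\widehat B$. Using $R_{k'} = G(I - \widehat P_{k'})$, I would rewrite the empirical covariance as
\begin{equation*}
\widehat B = \tfrac{1}{m-1}(I - \widehat P_{k'}) G^T G (I - \widehat P_{k'}) \,-\, \tfrac{m}{m-1}(I - \widehat P_{k'}) \bar G^T \bar G (I - \widehat P_{k'}),
\end{equation*}
where $\bar G$ is the row of column means of $G$. Thus the problem reduces to the almost sure limits of $\tfrac{1}{m} G^T G$ and $\bar G$. Since the rows $F_{s\star}$ are i.i.d.\ and the rows $G_{s\star}$ are conditionally independent across $s$ given $F$, the rows of $G$ are unconditionally i.i.d.\ and the strong law of large numbers applies entrywise. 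A short computation, conditioning on $F$ and using $\E[F_{s\star}^T F_{s\star}] = \Sigma + \mu^T\mu$, gives $\E[G_{si}G_{sj}] = 4(Q^T\Sigma Q)_{ij} + 4(Q^T\mu^T\mu Q)_{ij} + \delta_{ij} D_{ii}$ and $\E[G_{si}] = 2(\mu Q)_i$. Hence $\tfrac{1}{m} G^T G \to D + 4 Q^T\Sigma Q + 4 Q^T\mu^T\mu Q$ and $\bar G \to 2\mu Q$, so $\bar G^T \bar G \to 4 Q^T\mu^T\mu Q$. Using $\tfrac{m}{m-1} \to 1$, the two $4 Q^T\mu^T\mu Q$ contributions cancel exactly, leaving $\widehat B \to (I - P_{k'})(D + 4 Q^T\Sigma Q)(I - P_{k'})$.

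The remaining identities are algebraic. Subtracting the two limits yields $\widehat B - \widehat C \to 4 (I - P_{k'}) Q^T \Sigma Q (I - P_{k'})$. For the second equality, $P = Q^T (QQ^T)^{-1} Q$ satisfies $PQ^T = Q^T$, so $(I - P_{k'}) Q^T = (P - P_{k'}) Q^T$ and, by transposition, $Q (I - P_{k'}) = Q (P - P_{k'})$; substituting both factors gives $4 (P - P_{k'}) Q^T \Sigma Q (P - P_{k'})$, which manifestly vanishes when $k' = k$ and $P_k = P$. I expect the main obstacle to be the careful bookkeeping between conditional and unconditional expectations in the limit of $\tfrac{1}{m} G^T G$ and, in particular, verifying that the $\mu^T\mu$ piece cancels exactly with the mean-correction term $\bar G^T \bar G$; the rest is continuous mapping and the strong law.
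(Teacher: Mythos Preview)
Your argument is correct, and the algebraic reformulation
\[
\widehat B \;=\; \tfrac{1}{m-1}(I-\widehat P_{k'})G^TG(I-\widehat P_{k'}) \;-\; \tfrac{m}{m-1}(I-\widehat P_{k'})\bar G^T\bar G(I-\widehat P_{k'})
\]
is a clean way to organise the proof. The route, however, differs from the paper's. The paper does not factor out the projection; instead it introduces the auxiliary residual $T_{k'}=G(I-P_{k'})$ using the \emph{limit} projection, proves by the strong law that the empirical covariance $\widetilde B$ of $T_{k'}$ converges to $(I-P_{k'})(D+4Q^T\Sigma Q)(I-P_{k'})$, and then controls $\widehat B-\widetilde B$ by writing $R_{k'}=T_{k'}+G(P_{k'}-\widehat P_{k'})$ and bounding the three cross terms explicitly via $|G_{si}|\le 2$ and $\widehat P_{k'}\to P_{k'}$ entrywise.

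Your approach trades those explicit error bounds for a single appeal to continuous mapping: once $\tfrac{1}{m}G^TG$ and $\bar G$ converge almost surely (which they do, since the rows $G_{s\star}$ are unconditionally i.i.d.\ and bounded), and $\widehat P_{k'}\to P_{k'}$, the limit of $\widehat B$ follows immediately. This is shorter and makes transparent why the $4Q^T\mu\mu^TQ$ contribution from $\tfrac{1}{m}G^TG$ is exactly cancelled by the mean-correction term, something the paper's version leaves implicit inside $\cov(G_{s\star})$. The paper's decomposition, on the other hand, isolates the effect of replacing $\widehat P_{k'}$ by $P_{k'}$ and would adapt more readily if one wanted quantitative rates rather than bare convergence. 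Both proofs of $\widehat C\to(I-P_{k'})D(I-P_{k'})$ and of the final algebraic identities are essentially identical.
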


\begin{theorem}\label{thm:n-1}
Assume $k'=k$ and $P_k=  P$. Furthermore, suppose as in Theorem~\ref{thm:BC}  and that the vector  with all entries equal to one  is in the space spanned by the rows of $Q$ (this is, for example, the case if the admixture proportions sum to one for each individual). Then,
\begin{align}\label{eq:-1}
\frac{\sum_{i=1}^n\sum_{j= 1,i\not=j }^n\widehat B_{ij}}{\sum_{i=1}^n \widehat B_{ii}}&\to \ -1,\quad\text{as}\quad m\to\infty.
\end{align}
  In addition, if $Q$ takes the form
$$Q=\begin{pmatrix} Q_1 & 0& \cdots &0\\ 0 & Q_2   &\cdots &0\\ \vdots &\vdots & \ddots & \vdots \\ 0&0&\cdots& Q_r\end{pmatrix}$$
where $Q_\ell$ has dimension  $k_\ell\times n_\ell$, $\sum_{\ell=1}^r k_\ell=k$ and $\sum_{\ell=1}^r n_\ell=n$, then \eqref{eq:-1}  holds   for each component of $n_\ell$ individuals. If $Q_\ell=(1 \ldots 1)$, then 
$$\widehat b_{ij}\ \to\ -\frac{1}{n_\ell-1},\quad\text{as}\quad m\to\infty,$$
for all individuals $i,j$ in the $\ell$-th component, irrespective the form of $Q_{\ell'}$, $\ell'\not=\ell$.
 \end{theorem}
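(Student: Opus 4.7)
The plan is to deduce everything from Theorem~\ref{thm:BC} using elementary linear algebra, with only a small piece of probabilistic bookkeeping needed for the last claim. Under $k'=k$ and $P_k=P$, Theorem~\ref{thm:BC} gives $\widehat B \to B_\infty := (I-P)(D+4Q^T\Sigma Q)(I-P)$, and since $QP=Q$ forces $Q(I-P)=0$, the cross term vanishes, leaving $B_\infty=(I-P)D(I-P)$. The hypothesis that $\mathbf{1}$ lies in the row span of $Q$, i.e.\ in the range of $P$, means $P\mathbf{1}=\mathbf{1}$ and hence $(I-P)\mathbf{1}=0$. Writing $\sum_{i,j}\widehat B_{ij}=\mathbf{1}^T\widehat B\mathbf{1}$ and $\sum_i\widehat B_{ii}=\operatorname{tr}\widehat B$, the ratio in \eqref{eq:-1} equals $(\mathbf{1}^T\widehat B\mathbf{1}-\operatorname{tr}\widehat B)/\operatorname{tr}\widehat B$, which converges to $-1$ provided $\operatorname{tr}B_\infty=\sum_i D_{ii}(1-P_{ii})>0$. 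Since $D_{ii}>0$ and $P$ is an orthogonal projection of rank $k<n$ whose diagonal entries $P_{ii}\in[0,1]$ cannot all equal $1$, this positivity is immediate.

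For the block case I would first observe that $QQ^T=\diag(Q_1Q_1^T,\ldots,Q_rQ_r^T)$ is block diagonal, hence so is $(QQ^T)^{-1}$, and therefore $P=Q^T(QQ^T)^{-1}Q$ is block diagonal with $\ell$-th block $P_\ell=Q_\ell^T(Q_\ell Q_\ell^T)^{-1}Q_\ell$. Thus $B_\infty$ is block diagonal with $\ell$-th block $(I-P_\ell)D_\ell(I-P_\ell)$, where $D_\ell$ is the corresponding block of $D$. The global condition that $\mathbf{1}_n$ is in the row span of $Q$ restricts coordinate-wise, under the block layout, to $\mathbf{1}_{n_\ell}$ being in the row span of $Q_\ell$ for each $\ell$, so the argument of the previous paragraph applies verbatim within each block and yields \eqref{eq:-1} for the $n_\ell$ individuals in that component.

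For the final refinement, $Q_\ell=(1,\ldots,1)$ gives $Q_\ell Q_\ell^T=n_\ell$ and $P_\ell=\frac{1}{n_\ell}\mathbf{1}_{n_\ell}\mathbf{1}_{n_\ell}^T$. The small probabilistic input is that, because the rows of $F$ are i.i.d.\ and $Q_\ell=(1,\ldots,1)$ assigns every individual in block $\ell$ the same single ancestral column of $F$, the random variable $\Pi_{si}$ has the same distribution for every $i$ in the block, so $D_{ii}=2\E[\Pi_{si}(1-\Pi_{si})]$ takes a common value $d_\ell$ and $D_\ell=d_\ell I_{n_\ell}$. Since $I-P_\ell$ is a projection, $(I-P_\ell)D_\ell(I-P_\ell)=d_\ell(I-P_\ell)$, whose $(i,j)$-entry is $d_\ell(\delta_{ij}-1/n_\ell)$. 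For $i\neq j$ the limit of $\widehat b_{ij}$ is therefore $(-1/n_\ell)/((n_\ell-1)/n_\ell)=-1/(n_\ell-1)$, independently of $Q_{\ell'}$ for $\ell'\neq\ell$. The only substantive step is verifying $\operatorname{tr}B_\infty>0$ so the limiting ratio is well-defined; everything else is direct matrix algebra on top of Theorem~\ref{thm:BC}.
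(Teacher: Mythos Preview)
Your proof is correct and rests on the same core observation as the paper's---that $(I-P)\mathbf{1}=0$ forces the column (and row) sums of the limit of $\widehat B$ to vanish---but you package it differently. The paper works with the random residuals $T_k=G(I-P)$, notes that $\sum_i T_{k,si}=0$ deterministically for each $s$ (regression residuals sum to zero when the constant is in the regressor space), and then expands $\var\!\big(\sum_i T_{k,1i}\big)=0$ into diagonal plus off-diagonal covariance terms; for the $Q_\ell=(1,\ldots,1)$ block it invokes exchangeability of $(T_{k,1i})_{i\in\ell}$ to equate all variances and all covariances. You instead read everything off the explicit limit $B_\infty=(I-P)D(I-P)$ supplied by Theorem~\ref{thm:BC}: the identity $\mathbf{1}^T B_\infty\mathbf{1}=0$ replaces the variance expansion, and for the last part you compute $D_\ell=d_\ell I$ directly from the fact that all individuals in the block share the same marginal $\Pi_{si}$, obtaining $B_{\infty,\ell}=d_\ell(I-\tfrac{1}{n_\ell}\mathbf{1}\mathbf{1}^T)$ explicitly. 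Your route is a touch more elementary (pure linear algebra on the limit matrix, no return to the random variables), and you are more careful than the paper in checking $\operatorname{tr}B_\infty>0$ so the limiting ratio is well-defined; the paper's route, on the other hand, makes the connection to ordinary regression residuals more transparent.
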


\begin{theorem}\label{thm:sub}
Assume $k'=k$ and $P_k=  P$. Furthermore, suppose as in Theorem~\ref{thm:BC}  and that $Q$ takes the form
$$Q=\begin{pmatrix} Q_1 & Q_2 \\ 0 & Q_3    \end{pmatrix},$$
where $Q_1=(1 \ldots 1)$ has dimension  $ 1\times n_1$, $n_1\le n$. 
Then, $\widehat b_{ij}$ converges as $m\to\infty$ to a value larger than or equal to $-\tfrac{1}{n_1-1},$
for all   $i,j=1,\ldots,n_1$.
 \end{theorem}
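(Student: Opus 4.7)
The plan is to apply Theorem~\ref{thm:BC} to pass to the limit and then reduce the bound to an elementary inequality by exploiting the block structure of $Q$. Since $k'=k$ and $P_k=P$, Theorem~\ref{thm:BC} gives $\widehat B \to M := (I-P)D(I-P)$ as $m\to\infty$, so $\widehat b_{ij}$ converges to $M_{ij}/\sqrt{M_{ii}M_{jj}}$. It thus suffices to prove that this limiting correlation is at least $-1/(n_1-1)$ for $i,j\le n_1$.

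Two structural observations will drive the argument. First, the block form of $Q$ implies that for every $i\le n_1$ the $i$-th column of $Q$ equals $(1,0,\ldots,0)^T \in \R^k$, so $\Pi_{si} = F_{s1}$ does not depend on $i\le n_1$ and consequently $D_{11}=\cdots=D_{n_1 n_1} =: d$. Second, from $P_{ij} = (Q_{\star i})^T (QQ^T)^{-1} Q_{\star j}$ and the same form of $Q_{\star i}$, the first $n_1$ rows of $P$ must all coincide with $e_1^T (QQ^T)^{-1}Q$; in particular $P_{ij} = c := [(QQ^T)^{-1}]_{11}$ for all $i,j\le n_1$. These two facts together collapse the first-block submatrix of $M$ to an expression in only two scalars.

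Next I would expand $M_{ij} = \sum_{k=1}^n D_{kk}(I-P)_{ik}(I-P)_{jk}$ for $i,j\le n_1$, splitting the sum at $n_1$. For $k\le n_1$ all $D_{kk}$ equal $d$ and the entries $(I-P)_{ik}$ are $1-c$ or $-c$; for $k>n_1$ the rows $(I-P)_{i,\star}$ and $(I-P)_{j,\star}$ agree, so the tail of the sum is the common nonnegative quantity $S := \sum_{k>n_1} D_{kk} P_{k1}^2$. A short calculation yields
\begin{align*}
M_{ii} &= d(n_1 c^2 - 2c + 1) + S, \\
M_{ij} &= d(n_1 c^2 - 2c) + S \quad (i\ne j),
\end{align*}
hence $M_{ii} - M_{ij} = d$ and $M_{ii} = M_{jj}$, so the limiting correlation is $b_{ij} = 1 - d/M_{ii}$. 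The inequality $b_{ij} \ge -1/(n_1-1)$ is therefore equivalent to $M_{ii} \ge d(n_1-1)/n_1$, which, after rewriting $n_1 c^2 - 2c + 1 = (n_1-1)/n_1 + (n_1c-1)^2/n_1$, becomes
\[
M_{ii} - \frac{d(n_1-1)}{n_1} = \frac{d(n_1 c - 1)^2}{n_1} + S \ge 0,
\]
which is manifest.

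I expect the conceptual obstacle to be recognising that the particular form $Q_1 = (1,\ldots,1)$ produces both constancy of $D$ on the first block \emph{and} coincidence of the first $n_1$ rows of $P$; once this is noted, the entire analysis reduces to the two scalars $c$ and $S$ and the inequality becomes the non-negativity of a perfect square. Equality is attained precisely when $c = 1/n_1$ and $S = 0$, which is the isolated-block scenario of Theorem~\ref{thm:n-1}.
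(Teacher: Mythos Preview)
Your proof is correct. The key structural observations (that the first $n_1$ columns of $Q$ all equal $e_1$, forcing both the corresponding diagonal entries of $D$ and the first $n_1$ rows of $P$ to coincide) are exactly what is needed, and the algebra that follows is accurate: the identity $M_{ii}-M_{ij}=d$ and the completion of the square $n_1c^2-2c+1=(n_1-1)/n_1+(n_1c-1)^2/n_1$ hold as stated.

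The paper takes a more conceptual route. Rather than computing $M=(I-P)D(I-P)$ entrywise, it observes that the residual vector $T_{k,1\star}=G_{1\star}(I-P)$ is exchangeable in its first $n_1$ coordinates (permuting these coordinates leaves both $Q$ and hence $P$, and the distribution of $G_{1\star}$, invariant). Then the single inequality
\[
0\le \var\!\Big(\sum_{i=1}^{n_1} T_{k,1i}\Big)=n_1\var(T_{k,11})+n_1(n_1-1)\cov(T_{k,11},T_{k,12})
\]
gives the bound immediately. The two arguments are in fact the same inequality in disguise: your quantity $d(n_1c-1)^2/n_1+S$ is exactly $n_1^{-2}\var\big(\sum_{i\le n_1}T_{k,1i}\big)$. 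What the exchangeability argument buys is brevity and a proof that does not require writing down $c$ or $S$ at all; what your explicit computation buys is the precise form of the slack, which makes the equality case ($c=1/n_1$, $S=0$) transparent and connects cleanly to Theorem~\ref{thm:n-1}.
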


The same statements in the last two theorems hold with $\widehat B$ and $\widehat b$  replaced by $\widehat C$ and $\widehat c$, respectively.

The three theorems provide means to evaluate the model. In particular, Theorem~\ref{thm:BC} might be used to assess the correctness (or appropriateness) of the proposed $k'$, while Theorem~\ref{thm:n-1} and Theorem~\ref{thm:sub} might be used to assess whether data from a group of individuals (e.g., a modern day population) originates from a single ancestral population,  irrespective, the origin of the remaining individuals.  We give examples in the Results  section. %\Cref{subsec:sim}.

The work flow is shown in  Algorithm \ref{alg:alg1}.  We process real and simulated genotype data using PCA 1, PCA 2, PCA 3, and the software ADMIXTURE, and evaluate the fit of the model.

\begin{algorithm}
%\SetAlgoLined
\begin{enumerate}
\item Choose $k'$,
\item Compute an estimate $\widehat P_{k'}$ of the projection $P$,
\item Calculate the residuals $R_{k'}=G(I-\widehat P_{k'})$,
\item Calculate the correlation coefficients, $\widehat b$  and  $\widehat c$,
\item  Plot $\widehat b$ and the difference, the corrected correlation coefficients, $\widehat b-\widehat c$,
\item  Assess visually the fit of the model.
\end{enumerate}
 
\caption{Work flow of the proposed method}\label{alg:alg1}
\end{algorithm}

\subsection{Estimation of $P_{k'}$}%\label{subsec:estPi}

Estimation of $Q, F,$ and $ \Pi$   has received considerable interest in the literature, using for example, maximum likelihood \citep{wang2003,Alexander2009}, Bayesian approaches \citep{Pritchardea2000} or PCA \citep{engelhardtstephens2010}.   

We discuss different ways to obtain an estimate $\widehat P_{k'}$ of $P$.

\subsubsection{Using an estimate $\widehat Q_{k'}$ of $Q_{k'}$ } 
 %\label{sec:Q}
An estimate $\widehat P_{k'}$  might be obtained by projecting onto the subspace spanned by the $k'$ rows of $\widehat Q_{k'}$,
$$\widehat P_{k'}=\widehat Q_{k'}^T(\widehat Q_{k'}\widehat Q_{k'}^T)^{-1}\widehat Q_{k'},$$
assuming $\text{rank}(\widehat Q_{k'})=k'$ for the calculation to be valid.

We apply this approach to estimate the projection matrix using output from the software ADMIXTURE.

\subsubsection{Using an estimate $\widehat \Pi_{k'}$ of $\Pi_{k'}$ } 
 %\label{sec:pi}
 
Let $\widetilde \Pi_{k'}$ be $k'$ linearly independent rows chosen from $\widehat \Pi_{k'}$ (out of $m$ rows). Then, an  estimate $\widehat P_{k'}$ of $P_{k'}$ is
$$\widehat P_{k'}=\widetilde \Pi_{k'}^T(\widetilde \Pi_{k'}\widetilde \Pi_{k'}^T)^{-1}\widetilde \Pi_{k'},$$
assuming $\text{rank}(\widehat \Pi_{k'})=k'$ for the calculation to be valid. Alternatively, one might apply  the Gram-Schmidt method in which case the vectors are  orthonormal by construction and $\widehat P_{k'}=\widetilde \Pi_{k'}^T\widetilde \Pi_{k'}$.  The estimate $\widehat P_{k'}$ is independent of the choice of the $k'$ rows, provided $\text{rank}(\widehat \Pi_{k'})=k'$.

\subsubsection{Using  PCA 1}%\label{sec:CS}

We consider a PCA approach, originally due to \cite{chenstorey2015}, to estimate the space spanned by the rows of $Q$. We follow the procedure laid out in  \cite{CabrerosStorey2019}.

Let $\widehat H$ be the symmetric matrix 
$$\widehat H= \frac1mG^TG-\widehat D.$$
Since $\widehat H$ is symmetric, all eigenvalues are real and the matrix is diagonalisable. Furthermore, $\widehat H$ is a variance adjusted version of  $\frac1mG^TG$, see  \eqref{eq:Dvar}.
Let $u_1,\ldots,u_{k'}$ be $k'\le k$ orthogonal eigenvectors belonging to the $k'$ largest eigenvalues of $\widehat H$,  counted with multiplicities.  Define the $n\x k'$ matrix $U_{k'}= (u_1,\ldots,u_{k'})$ and the $n\x n$ orthogonal projection matrix 
$$\widehat P_{k'}= U_{k'}(U_{k'}^TU_{k'})^{-1}U_{k'}^T=U_{k'}U_{k'}^T$$
onto the subspace given by the span of the vectors $u_1,\ldots,u_{k'}$.

In this particular case,    convergence of $\widehat P_{k'}$ can be made precise. Define the matrix $H=4Q^T(\Sigma+\mu\mu^T)Q$. Then, $H$ is symmetric and positive semi-definite because $\Sigma$ and $\mu\mu^T$ both are positive semi-definite.  Hence, $H$ has non-negative eigenvalues.   
Furthermore, according to Lemma~\ref{thm:hatHisunbiased} in the appendix, %\Cref{sec:math},
$\widehat H$ converges    
to $H$ as $m\to\infty$.  

\begin{theorem}\label{thm:Pkconvergence}
 Assume $k'\le k$. Let $\lambda_1\ge \ldots\ge \lambda_n\ge 0$ be the eigenvalues of $H$, with corresponding orthogonal eigenvectors $v_1,\ldots,v_n$. In particular,  $\lambda_{k+1}=\ldots=\lambda_n=0$, as $Q$ has rank $k$. Let $P_{k'}$ be the orthogonal projection onto the span of $v_1,\ldots,v_{k'}$, that is,
 $$P_{k'}= V_{k'}(V_{k'}^TV_{k'})^{-1}V_{k'}^T=V_{k'}V_{k'}^T,$$
where $V_{k'}=(v_1,\ldots,v_{k'})$.

 Assume $k'=n$ or $\lambda_{k'}>\lambda_{k'+1}$, referred to as the eigenvalue condition.  Then,  $\widehat P_{k'}\to P_{k'}$  
as $m\to\infty$. If the  eigenvalue condition is fulfilled for $k'=k$, then $P_k=P$, that is, $P_k$ is the orthogonal projection onto the span of the row vectors of $Q$.  In particular, the eigenvalue condition is fulfilled for $k'=k$ if and only if $\Sigma+\mu\mu^T$ is positive definite. The latter is the case if $\Sigma$ is positive definite.
\end{theorem}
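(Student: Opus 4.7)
The plan is to feed the convergence $\widehat H \to H$ supplied by Lemma~\ref{thm:hatHisunbiased} into a continuity-of-spectral-projections argument. Since the individual eigenvectors $u_j$ are only defined up to a sign (and, in the presence of repeated eigenvalues, up to an orthogonal rotation within the eigenspace) I would not try to show that the $u_j$ themselves converge; instead I would exploit the fact that, under an eigenvalue gap, the orthogonal projection onto the top-$k'$ eigenspace is a continuous function of the matrix.

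For the convergence $\widehat P_{k'}\to P_{k'}$, the case $k'=n$ is trivial since both projections equal the identity. For $k'<n$ with $\lambda_{k'}>\lambda_{k'+1}$, I would choose $\delta>0$ with $\lambda_{k'}-\delta>\lambda_{k'+1}+\delta$ and a positively oriented contour $\gamma\subset\mathbb{C}$ that encloses precisely the top $k'$ eigenvalues of $H$ and stays uniformly bounded away from the spectrum. Then the spectral projection admits the contour representation $P_{k'}=\frac{1}{2\pi i}\oint_\gamma (zI-H)^{-1}\,dz$. Since $\widehat H\to H$ and the eigenvalues of a symmetric matrix depend continuously on its entries (Weyl), for $m$ large the eigenvalues of $\widehat H$ separate the same way across $\gamma$, so the analogous contour integral gives $\widehat P_{k'}$. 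Uniform convergence of the resolvent $(zI-\widehat H)^{-1}\to (zI-H)^{-1}$ on $\gamma$ then yields $\widehat P_{k'}\to P_{k'}$. The verification of this uniform convergence on a compact contour bounded away from both spectra is the main technical step; everything else is bookkeeping.

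To identify $P_k$ with $P$ when the eigenvalue condition holds at $k'=k$, I would use that $\mathrm{Im}(H)\subseteq \mathrm{Im}(Q^T)$, which is the row space of $Q$ of dimension $k$, and symmetrically $\ker(H)\supseteq \ker(Q)=\mathrm{Im}(Q^T)^\perp$. Both of these orthogonal complementary subspaces are $H$-invariant, so an orthogonal eigenbasis of $H$ can be chosen respecting the splitting. Under $\lambda_k>0$, the $k$ positive-eigenvalue eigenvectors are forced into $\mathrm{Im}(Q^T)$ and span it, while the remaining $n-k$ eigenvectors (eigenvalue zero) span $\ker(Q)$. Hence $P_k$ projects onto the row space of $Q$, i.e.\ $P_k=P$.

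It remains to characterise when the eigenvalue condition holds at $k'=k$. Because $Q$ has rank $k$, the map $y\mapsto Qy$ is surjective onto $\R^k$ and $Q^T\colon \R^k\to\R^n$ is injective, so $\text{rank}(H)=\text{rank}(4Q^T(\Sigma+\mu\mu^T)Q)=\text{rank}(\Sigma+\mu\mu^T)$. Combined with $\text{rank}(H)\le k$, the condition $\lambda_k>\lambda_{k+1}=0$ is equivalent to $\text{rank}(H)=k$, which for the $k\times k$ positive semi-definite matrix $\Sigma+\mu\mu^T$ means positive definiteness. Finally, if $\Sigma$ is positive definite, then $x^T(\Sigma+\mu\mu^T)x=x^T\Sigma x+(x^T\mu)^2>0$ for every $x\neq 0$, so $\Sigma+\mu\mu^T$ is positive definite and the eigenvalue condition automatically holds at $k'=k$.
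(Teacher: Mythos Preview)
Your argument is correct. The convergence step via the Riesz contour representation of the spectral projection is a standard perturbation-theoretic device and works exactly as you outline: Lemma~\ref{thm:hatHisunbiased} gives $\widehat H\to H$, Weyl's inequality gives eigenvalue continuity, the gap $\lambda_{k'}>\lambda_{k'+1}$ lets you fix a contour that eventually separates the top $k'$ eigenvalues of $\widehat H$ as well, and uniform convergence of the resolvent on that compact contour yields $\widehat P_{k'}\to P_{k'}$. Your identification of $P_k$ with $P$ via the inclusions $\mathrm{Im}(H)\subseteq\mathrm{Im}(Q^T)$ and $\ker(Q)\subseteq\ker(H)$, together with the rank count, is clean, and the equivalence $\lambda_k>0\Leftrightarrow\Sigma+\mu\mu^T$ positive definite via $\mathrm{rank}(Q^T M Q)=\mathrm{rank}(M)$ for full-rank $Q$ is fine.

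The paper takes a different, more elementary route for the convergence step. Rather than invoking the resolvent calculus, it proves a self-contained lemma (Theorem~\ref{thm:convergenceofprojection}) by writing the eigenbasis of $A$ in terms of that of $A_m$ via an orthogonal matrix $T_m$, expanding $\|A-A_m\|_F^2$ to force the off-block entries $T_{m,ij}$ (with $i\le k<j$ or $j\le k<i$) to vanish in the limit, and then computing $\|E_kE_k^T-F_{m,k}F_{m,k}^T\|_F^2$ directly in terms of the $T_{m,ij}$. For $P_k=P$ and the positive-definiteness equivalence, the paper argues through a rank lemma (Lemma~\ref{lem:rankk}) showing that the row space of $AB$ coincides with that of $B$ when both factors have full rank; your image/kernel formulation is equivalent. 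Your approach is shorter if one is willing to import the Riesz projection machinery; the paper's is longer but entirely self-contained within real linear algebra, which fits the intended readership.
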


For $k'=k$, the correct row space of $Q$ is found eventually, but not $Q$ itself.  If $k'<k$, then a subspace of this row space is found, corresponding to the $k'$ largest eigenvalues.   
As the data is not mean centred, we discard the first principal component, and use the  subsequent $k'-1$ eigenvectors and eigenvalues.

\subsubsection{Using   PCA 2 (mean centred data)}
%\label{sec:genetic}

A popular approach to estimation of $\Pi$ in the admixture model is PCA based on mean centred data, or mean and variance normalised data \citep{Pritchardea2000,engelhardtstephens2010,Patterson2006}.

Let $G_1=G-\tfrac 1n GE=G(I-\tfrac 1n E)$ be the  SNP-wise mean centred genotypes, where $E$ is an $n\times n$ matrix with all entries equal to one. Following the exposition and notation in \cite{CabrerosStorey2019}, let $ G_1=U\Delta V^T$ be the SVD of $ G_1$, where $\Delta V^T$ consists of  the row-wise principal components of $ G_1$, ordered according to the singular values. 
Define 
$$S_{k'}=\begin{pmatrix} U^T_{1:(k'-1)}\\ e\end{pmatrix},$$
where $e=(1\,1\,\ldots\,1)$ is a vector with all entries one, and $U^T_{1:(k'-1)}$ contains the top $k'-1$ rows of $U^T$. Then, an estimate of the projection   is
$$ \widehat P_{k'}=S_{k'}^T(S_{k'}S_{k'}^T)^{-1}S_{k'}.$$

The   squared singular values   in the SVD decomposition of $G_1$  are the same as the eigenvalues of  
$$\widehat H_1=\frac 1m G^T_1G_1=\frac 1m\left(I-\frac 1nE\right) G^TG \left(I-\frac 1nE \right) $$
 \citep{Jolliffe2002}. We have
\begin{align}
\E[\widehat H_1] &=\frac 1m\left(I-\frac 1nE\right)\E[G^TG]\left(I-\frac 1nE\right) \nonumber\\
&=\left(I-\frac 1nE\right)(D+ 4Q^T(\Sigma+\mu\mu^T)Q)\left(I-\frac 1nE\right). \label{eq:H1}
\end{align}
Let $H_1$ denote the right hand side of \eqref{eq:H1}.

\begin{theorem}\label{thm:Pkconvergence2}
 Let $\lambda_1\ge \ldots\ge \lambda_n$ be the eigenvalues of $H_1$, with corresponding orthogonal eigenvectors $v_1,\ldots,v_n$. In particular,   $v_n=e$ and $ \lambda_n=0$. If $D$ has all diagonal entries  positive, then $\lambda_{n-1}> 0$. 
 
  Let $k'\le n$  and let $P_{k'}$ be the orthogonal projection onto the span of $v_1,\ldots,v_{k'-1},e$, that is, 
 $$P_{k'}= V_{k'}(V_{k'}^TV_{k'})^{-1}V_{k'}^T,$$
where $V_{k'}=(v_1,\ldots,v_{k'-1},e)$.
 If $k'=n$ or $\lambda_{k'}>\lambda_{k'+1}$, then $\widehat P_{k'}\to P_{k'}$  as $m\to\infty$. 
 \end{theorem}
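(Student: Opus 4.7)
The plan is to exploit the factorisation $H_1 = MAM$, where $M = I-\tfrac{1}{n}E$ is the (symmetric) orthogonal projection onto $e^\perp$ and $A = D + 4Q^T(\Sigma+\mu\mu^T)Q$, and then to combine the convergence $\widehat H_1\to H_1$ with the observation that $e$ is an exact zero eigenvector of both $\widehat H_1$ and $H_1$, via standard spectral perturbation theory.

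First I would verify the spectral claims. Since $Ee=ne$, we have $Me=0$ and hence $H_1 e = MAMe = 0$, giving $v_n=e$ (up to normalisation) and $\lambda_n=0$; this is the smallest eigenvalue because $H_1$ is positive semi-definite. For $\lambda_{n-1}>0$, if $D$ has positive diagonal then $A$ is positive definite, and for any $x\in e^\perp$ one has $Mx=x$, so $x^T H_1 x = x^T A x > 0$ whenever $x\neq 0$. Thus $H_1|_{e^\perp}$ is positive definite, forcing $\lambda_1,\ldots,\lambda_{n-1}$ all to be strictly positive.

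For the limit, I would first note $\widehat H_1 \to H_1$. The right hand side of \eqref{eq:H1} is the limit: a law-of-large-numbers argument for $\tfrac{1}{m}G^T G$ (the rows of $F$ are i.i.d.\ and the entries of $G$ are conditionally independent), paralleling Lemma~\ref{thm:hatHisunbiased}, gives convergence of $\tfrac{1}{m}G^T G$ to $D + 4Q^T(\Sigma+\mu\mu^T)Q$; sandwiching by the fixed matrix $M$ yields $\widehat H_1\to H_1$. Because $e$ lies in the kernel of both matrices, their actions on $e^\perp$ govern all remaining eigenstructure, and on this invariant subspace the gap condition $\lambda_{k'}>\lambda_{k'+1}$ allows an appeal to Davis--Kahan (or equivalently a Riesz projector via contour integration) to conclude that the spectral projection onto the span of the top $k'-1$ empirical eigenvectors of $\widehat H_1$ converges to that of $H_1$.

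Finally, $\widehat P_{k'} = S_{k'}^T(S_{k'}S_{k'}^T)^{-1}S_{k'}$ is by construction the orthogonal projection onto the $k'$-dimensional span of these empirical top eigenvectors together with $e$. Since those eigenvectors are automatically orthogonal to $e$ (they correspond to eigenvalues different from $\lambda_n=0$ of the symmetric $\widehat H_1$), the projection decomposes as a sum of two mutually orthogonal pieces; the first converges by the previous paragraph and the second, $\tfrac{1}{n}ee^T$, is fixed, yielding $\widehat P_{k'}\to P_{k'}$. The main obstacle is the spectral perturbation step: possible multiplicities among the non-top eigenvalues must be handled by working with the spectral projection rather than individual eigenvectors, and restricting to $e^\perp$ together with isolating the zero eigenspace is what makes this manageable.
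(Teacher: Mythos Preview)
Your proposal is correct and follows essentially the same route as the paper: verify $H_1 e = 0$, use positive definiteness of $D + 4Q^T(\Sigma+\mu\mu^T)Q$ to get $\lambda_{n-1}>0$, establish $\widehat H_1 \to H_1$ by the law of large numbers as in Lemma~\ref{thm:hatHisunbiased}, and then invoke a spectral perturbation result under the eigenvalue gap. The only cosmetic difference is that the paper appeals to its in-house Theorem~\ref{thm:convergenceofprojection} rather than Davis--Kahan, and is less explicit than you are about splitting $\widehat P_{k'}$ into the fixed $\tfrac{1}{n}ee^T$ piece and the converging piece on $e^\perp$.
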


There are no guarantees that for $k'=k$, we have   $P_k=P$ and that the difference between $\widehat B$ and $\widehat C$ converges to zero for large $m$. However, this is the case  under some extra conditions, and  appears to be the case   in many practical situations, see the Results section. %\Cref{subsec:sim}.

\begin{theorem}\label{thm:Pkconvergence3}
 Assume  $D=dI$ for some $d>0$. Furthermore, assume the vector $e$ is in the row space of $Q$ (this is, for example, the case if the admixture proportions sum to one for each individual). Then, $\lambda_k=\ldots=\lambda_{n-1}=d$, and $\lambda_n=0$.
 
 If $\Sigma+\mu\mu^T$ is positive definite, then $\lambda_{k+1}>\lambda_k$ and $P_k=P$, where $P_k$ is as in Theorem~\ref{thm:Pkconvergence}.
As a consequence, with $k'=k$ in Theorem~\ref{thm:BC}, $\widehat B-\widehat C\to0$ as $m\to\infty$.
 \end{theorem}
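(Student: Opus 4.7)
The plan is to directly decompose $\R^n$ into the $H_1$-invariant subspaces dictated by $Q$ and exploit the hypothesis $D = dI$ to read off the eigenvalues by inspection. Setting $M = I - \tfrac{1}{n}E$, observe that $M$ is the orthogonal projection onto $\{e\}^\perp$ (so $M^2 = M$ and $Me = 0$). Since $D = dI$ commutes with $M$, the definition of $H_1$ from \eqref{eq:H1} reduces to
\[
H_1 \,=\, dM \,+\, 4\,M Q^T(\Sigma+\mu\mu^T)Q\,M.
\]
Because $e$ is assumed to lie in the row space of $Q$, write $\R^n = W \oplus \ker(Q) \oplus \mathrm{span}(e)$, where $W := \mathrm{row}(Q) \cap \{e\}^\perp$ has dimension $k-1$, $\ker(Q)$ has dimension $n-k$, and the three summands are mutually orthogonal (using the standard fact $\mathrm{row}(Q) \perp \ker(Q)$ and $e \in \mathrm{row}(Q)$).

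Next I would check that each summand is $H_1$-invariant and identify the eigenvalues on each. For $v = e$, $Me = 0$ gives $H_1 e = 0$, so $\lambda_n = 0$. For $v \in \ker(Q)$, both $Mv = v$ (as $v \perp e$) and $Qv = 0$ yield $H_1 v = dv$; this produces the eigenvalue $d$ with multiplicity at least $n-k$. For $v \in W$, $Mv = v$, and $Q^T(\Sigma+\mu\mu^T)Qv$ lies in $\mathrm{row}(Q) = W \oplus \mathrm{span}(e)$, so applying $M$ projects it into $W$; hence $H_1 W \subseteq W$. These three pieces together account for all of $\R^n$, so in particular $d$ appears with multiplicity exactly $n-k$ in the part of the spectrum outside $W$, and ordering gives $\lambda_k = \cdots = \lambda_{n-1} = d$ and $\lambda_n = 0$.

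To promote this to the positive-definite conclusion, I would show the eigenvalues of $H_1$ restricted to $W$ are strictly greater than $d$. For $v \in W$,
\[
v^T H_1 v \,=\, d\|v\|^2 \,+\, 4\,(Qv)^T(\Sigma+\mu\mu^T)(Qv).
\]
The second term is nonnegative by positive semidefiniteness of $\Sigma+\mu\mu^T$, and it is \emph{strictly} positive for $v \neq 0$ because $W \cap \ker(Q) = \{0\}$ (so $Qv \neq 0$) together with positive definiteness of $\Sigma+\mu\mu^T$. Therefore $v^T H_1 v > d\|v\|^2$ on $W\setminus\{0\}$, so the $k-1$ eigenvalues of $H_1|_W$ all exceed $d$, and in particular $\lambda_{k-1} > \lambda_k = d$ (this is the separation needed to invoke the eigenvalue condition of Theorem~\ref{thm:Pkconvergence2}). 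Since the top $k-1$ eigenvectors then span precisely $W$, we get
\[
\mathrm{span}(v_1,\ldots,v_{k-1},e) \,=\, W \oplus \mathrm{span}(e) \,=\, \mathrm{row}(Q),
\]
which means $P_k$ (as defined in Theorem~\ref{thm:Pkconvergence2}) equals $P$. Finally, applying Theorem~\ref{thm:BC} with $k'=k$ and $P_k = P$ gives $\widehat B - \widehat C \to 4(P - P)Q^T\Sigma Q(P - P) = 0$, as claimed.

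The only delicate point is the strict inequality $v^T H_1 v > d\|v\|^2$ on $W$: this is where positive definiteness of $\Sigma+\mu\mu^T$ is genuinely used, and where the assumption $e \in \mathrm{row}(Q)$ enters via forcing $W \subset \mathrm{row}(Q)$ and thus $W \cap \ker(Q) = \{0\}$. Once this is established, the spectral decomposition and the identification $P_k = P$ are almost algebraic bookkeeping.
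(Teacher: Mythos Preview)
Your argument is correct and follows essentially the same route as the paper's proof. Both reduce the problem to the observation that, on $\{e\}^\perp$, the operator $H_1$ equals $dI$ plus the positive semidefinite rank-$(k-1)$ matrix $K=4MQ^T(\Sigma+\mu\mu^T)QM$; the paper phrases this as the eigenvalue shift $H_1v=\lambda v\Leftrightarrow Kv=(\lambda-d)v$ and then invokes a rank lemma, whereas you write down the orthogonal decomposition $\R^n=W\oplus\ker(Q)\oplus\mathrm{span}(e)$ explicitly and use the Rayleigh quotient $v^TH_1v$ on $W$. The two arguments are interchangeable, and your identification of $W$ with the span of the top $k-1$ eigenvectors is exactly how the paper concludes $P_k=P$.

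One small point worth tightening: in the first part of the statement (before positive definiteness is assumed) your ordering claim $\lambda_k=\cdots=\lambda_{n-1}=d$ implicitly uses that the $k-1$ eigenvalues of $H_1|_W$ are at least $d$. This follows immediately from your quadratic form computation together with the fact that $\Sigma+\mu\mu^T$ is always positive \emph{semi}definite (being a covariance plus a rank-one nonnegative matrix), but you only invoke that inequality later under the stronger positive-definite hypothesis. Stating the $\ge d$ bound up front would close the gap. You also correctly note that the relevant separation is $\lambda_{k-1}>\lambda_k$, which is what Theorem~\ref{thm:Pkconvergence2} actually requires for $k'=k$; the inequality $\lambda_{k+1}>\lambda_k$ in the theorem statement appears to be a typo.
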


\subsubsection{Using   PCA 3 (mean and variance normalised data)}

Let $ G_2= W^{-1}G_1$ be the SNP mean and variance normalised genotypes, where $W$ is an $m'\times m'$ diagonal matrix with $s$-th entry being the observed standard deviation of the genotypes of SNP $s$. All SNPs for which no variation are observed are  removed, hence the number of SNPs might be smaller than the original number, $m'\le m$. Following the same procedure as for PCA 2, %in \Cref{sec:genetic}, 
let $ G_2=U\Delta V^T$ be the SVD of $ G_2$, where $\Delta V^T$ consists of the row-wise principal components of $ G_2$, ordered according to the  singular values. Define 
$$S_{k'}=\begin{pmatrix} V^T_{1:(k'-1)}\\ e\end{pmatrix},$$
where $e=(1\,1\,\ldots\,1)$, and $V^T_{1:(k'-1)}$ contains the top $k'-1$ rows of $V^T$. Then, an estimate of the projection   is
$ \widehat P_{k'}=S_{k'}^T(S_{k'}S_{k'}^T)^{-1}S_{k'}.$

We are not aware of any theoretical justification of this procedure similar to Theorem~\ref{thm:BC}, but it appears to perform well   in many practical situations, according to our simulations.  %\Cref{subsec:sim}.

\subsection{Simulation of genotype data}\label{sec:simuldetails}

We simulated genotype data from different demographic scenarios using different sampling strategies. We deliberately choose different sampling strategies to challenge the method. We first made  simple simulations that illustrate the problem of model fit as well as to demonstrate the theoretical and practical properties of the residual correlations that arise from having data from a finite number of individuals and a large number of SNPs. An overview of the simulations are given in Table \ref{tab:scenariosOverview}. 

In the first two scenarios, the ancestral allele frequencies are simulated independently for each ancestral population from a uniform distribution, $F_{s i}\sim\text{Unif}(0,1)$ for each site $s=1,\ldots,m$ and each ancestral population $i=1,\ldots,k$. In scenario 1, we simulated unadmixed individuals from three populations with either an equal or an unequal number of sampled individuals from each population. 
 In scenario 2, we simulated two ancestral populations and a population that is admixed with half of its ancestry coming from each of the two ancestral populations. 
 
 In scenario 3, we set $F_{s i}\sim\text{Unif}(0.01,0.99)$ and simulated spatial admixture in a way that resembles a spatial decline of continuous gene flow between populations living in a long narrow island. We first simulated a single population in the middle of the long island. From both sides of the island, we then recursively simulated new populations from a Balding-Nichols distribution with parameter $F_{st}=0.001$  using the R package ‘bnpsd’ \citep{ochoaStorey2019a}. In this way, each pair of adjacent populations along the island has an $F_{st}$ of 0.001. Additional details on the simulation and an schematic visualization can be found in Figure 2 of \cite{genisanders2020}.

In scenario 4, we first simulated allele frequencies for an ancestral population from a symmetric beta distribution with shape parameter 0.03, $F_{s i}\sim\text{Beta}(0.3,0.3)$, which results in an allele frequency spectrum enriched for rare variants, mimicking the human allele frequency spectrum. We then sampled allele frequencies from a bifurcating tree (((pop1:0.1,popGhost:0.2):0.05,pop2:0.3):0.1,pop3:0.5), where pop1 and popGhost are sister populations and pop3 is an outgroup. Using the Balding-Nichols distribution and the $F_{st}$ branch lengths of the tree (see Figure \ref{Fig.5}), we sampled allele frequencies in the four leaf nodes. Then, we created an admixed population with 30\% ancestry from  popGhost   and 70\% from pop2. We sampled 10 million genotypes for 50 individuals from each population except for the ghost population which was not included in the analysis, and subsequently removed sites with a sample minor allele frequency below 0.05, resulting in a total of 694,285 sites. 

In scenario 5, we simulated an ancestral population with allele frequencies from a uniform distribution $F_{s i} \sim$ \text{Unif} $(0.05, 0.95)$, from which we sampled allele frequencies for two daughter populations from a Balding Nicholds distributions with $F_{st} = 0.3$ from the ancestral population, using 'bnpsd'. We then created recent hybrids based on a pedigree where all but one founder has ancestry from the first population. The number of generations in the pedigree then determines the admixture proportions and the age of the admixture where F1 individuals have one unadmixed parent from each population and  backcross individuals have one unadmixed parent and the other F1. Double backcross individuals have one unadmixed parent and the other is a backcross. We continue to quadruple backcross with one unadmixed parent and  the other triple backcross. Note that for the recent hybrids the ancestry of the pair of alleles at each loci is no longer independent which is a violation of the admixture model.

\begin{table*}[htbp]
\centering
\caption{Overview of simulations.}
\begin{tableminipage}{\textwidth}
\begin{tabularx}{0.95\textwidth}{c|ccccc}
\hline 
\bf Scenario  & $\bm k$ & $\bm n$ & $\bm m$ &\bf Description &  $\bm F_{is}$\footnote{Ancestral allele frequencies, $i=1,\ldots,k$} \\
  \hline 
  %\hline 
  1  & 3 & 20,20,20 & $500K$& Unadmixed & $\text{Unif}(0,1)$\\
  %\hline 
  1 & 3 & 10,20,30 & $500K$ & Unadmixed & $\text{Unif}(0,1)$\\
  %\hline 
  2  & 2 & 20,20,20 & $500K$ & Admixed & $\text{Unif}(0,1)$\\
  %\hline
    2  & 2 & 10,20,30 & $500K$ & Admixed & $\text{Unif}(0,1)$\\
  %\hline
    3 &  & 500 & {\bf $100K$}\footnote{after applying MAF$>5$\% filtering, 88,082 remained.} &  Spatial with $F_{st}=0.001$& $\text{Unif}(0.01,0.99)$\\  &&&& between adjacent populations \\
   % \hline
    4 & 4 & 50,50,50,50,0\footnote{No reference samples are provided on the ghost population.}  & {\bf$10M$}\footnote{after applying MAF$>5$\% filtering, 694,285 remained.} & Ghost admixture & $\text{Beta}(0.3, 0.3)$ \\
   % \hline
    5 & 2 & 20,20,50 & $500K$ & Recent hybrids & $\text{Unif}(0.05,0.95)$\\
  \hline
\end{tabularx}
\label{tab:scenariosOverview}
\end{tableminipage}
\end{table*}

\section{Results}%\label{subsec:sim}

\subsection{Scenario 1} 
In this first set-up, we demonstrate the method using    PCA 1 only. We simulated unadmixed individuals from $k=3$ ancestral populations 
$$Q=\begin{pmatrix}
{1}_{n_{1}} & 0 &0 \\
0&  {1}_{n_{2}} & 0 \\
0& 0&  {1}_{n_{3}}
\end{pmatrix},$$
where $ {1}_{n_i}$ is a row vector with all elements  being one, and $n_1+n_2+n_3=n$. We simulated genotypes for $n=60$ individuals with sample sizes $n_1, n_2$ and $n_3$, respectively,  as detailed in the previous section. %\Cref{sec:simuldetails}. 
In Figure~\ref{Fig.1}(A), we show the residual correlation coefficients for $k'=2,3$ and plot the corresponding major PCs. For the PCA 1 approach, the first principal component does not relate to  population structure as the data is not mean centered, and we use the following $k'-1$ principal components. %which instead is captured by the principal components $2$ to $k$. %Thus, for this approach using the top $k'$ principal components corresponds to assuming that there are $k'$ ancestral populations. 
  
When assuming that there are only two populations, $k'=2$, we note that the empirical correlation coefficients appear largely consistent within each population sample, but the corrected  correlation coefficients are generally non-zero with different signs, which points to model misfit. In contrast, when assuming the correct number of populations is $k'=3$, the empirical correlation coefficients match nicely the theoretical values of $-\tfrac 1{n_i-1}$, which comply with Theorem~\ref{thm:n-1} (see  Table~\ref{tab:my_label1}). 
A fairly homogeneous pattern in the corrected correlation coefficients appears around zero across all samples. 
This is a good indication that the model fits well and that the PCA plots using principal components 2 and 3 reflex the data well.
\begin{figure}[!ht]
\centering
%\captionsetup{font={small},{labelfont=bf}}
\includegraphics[width=1\linewidth]{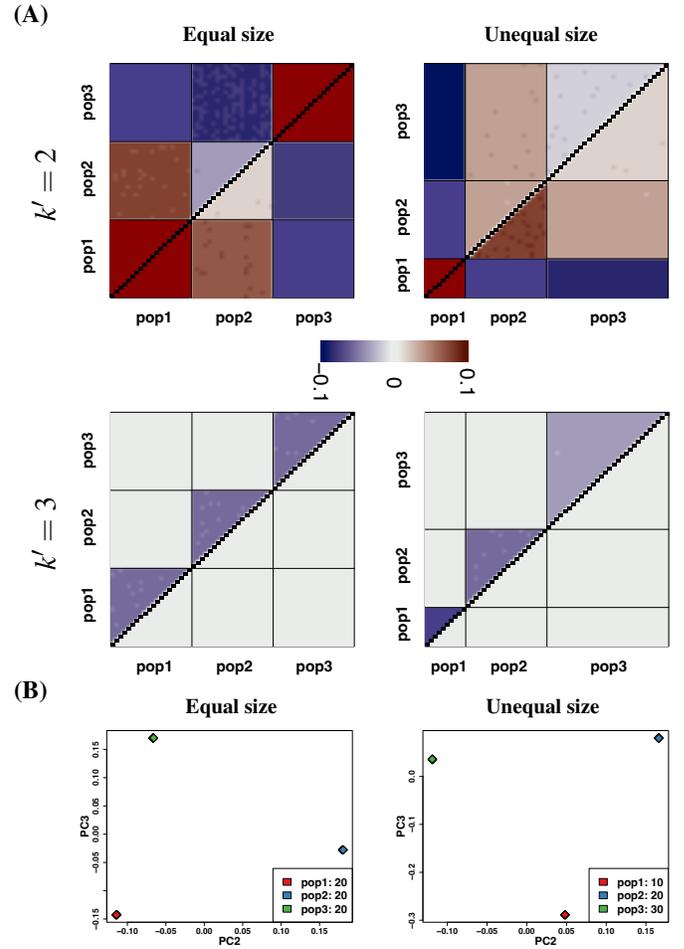}
\caption{Results for simulated Scenario 1. (A) The upper triangle in the plots shows the empirical correlation coefficients $\hat{b}$ and the lower triangle shows the corrected correlation coefficients $\hat{b}-\hat{c}$. (B) The major principal components ($k'=3$)  result in a clear separation of the three samples (all data points within each sample are almost identical). }
\label{Fig.1}
\end{figure}

\begin{table*}[!ht]
%\captionsetup{font={small},{labelfont=bf}}
\caption{The mean (standard deviation) of $\hat{b}$ and $\hat{b}-\hat{c}$ within each population using PCA 1.  }
\label{tab:my_label1}
\centering
\begin{tableminipage}{\textwidth}
\begin{tabularx}{0.95\textwidth}{c|cc|ccccc}
%\begin{tabular}{ccccc}
\hline 
\bf Scenario 1 &$\bm  k^\prime$ & $\bm n$ &&\bf pop1 &\bf pop2 &\bf pop3 \\
\hline 
& $3$ & $(20,20,20)$ &$\hat{b}\ $\footnote{The second line of $\hat{b}$ in each case shows the theoretical value obtained from the limit  in Theorem~\ref{thm:BC}.}& -0.0526 (0.0015) & -0.0526 (0.0016) & -0.0526 (0.0016) \\
&   &   && -0.0526 & -0.0526 & -0.0526 \\
& &  &$\hat{b}-\hat{c}$ & 0e-04 (0.0015) & 0e-04 (0.0016) & 0e-04 (0.0016) \\
& & $(10,20,30)$  &$\hat{b}$& -0.1111 (0.0011) & -0.0526 (0.0016) & -0.0345 (0.0016)\\
&  &&    & -0.1111 & -0.0526 & -0.0345 \\
%\hline 
&&   &$\hat{b}-\hat{c}$& 0e-04 (0.0012) & 0e-04 (0.0016) & 0e-04 (0.0016)\\
\hline 
\bf Scenario 2 &$\bm k^\prime$ & $\bm n$ &&\bf pop1 &\bf admixed &\bf pop3 \\
\hline 
& $2$  & $(20,20,20)$&$\hat{b}$ & -0.0419 (0.0015) & -0.0192 (0.0015) & -0.0420 (0.0015) \\
&  & &  & -0.0420  & -0.0193 & -0.0420 \\
& &  &$\hat{b}-\hat{c}$  & 0e-04 (0.0015) & 0e-04 (0.0015) & 0e-04 (0.0015) \\
 & &$(10,20,30)$  &$\hat{b}$& -0.0701 (0.0018) & -0.0228 (0.0014) & -0.0304 (0.0016)\\
&  & &   & -0.0701 & -0.0229  & -0.0304 \\
&& &$\hat{b}-\hat{c}$  & 0e-04 (0.0017) & 0e-04 (0.0014) &  0e-04 (0.0016) \\
\hline 
%\bf Scenario 2 &$\bm {k^\prime=2}$ &\bf pop1 &\bf admixed &\bf pop3 \\
%\hline 
%$\hat{b}$&equal  & -0.0419 (0.0015) & -0.0192 (0.0015) & -0.0420 (0.0015)\\
%& & -0.0420  & -0.0193 & -0.0420\\
%&   & -0.0701 & -0.0229  & -0.0304 \\
%\hline
%$\hat{b}-\hat{c}$&equal  & 0e-04 (0.0015) & 0e-04 (0.0015) & 0e-04 (0.0015)\\
%&unequal  & 0e-04 (0.0017) & 0e-04 (0.0014) &  0e-04 (0.0016)\\
%\hline
\bf Scenario 4 &$\bm {k^\prime}$ &$\bm n$&&\bf pop1 &\bf pop2 &\bf pop3 &\bf pop4\\
\hline 
&$3$ & $(50,50,50,50)$ & $\hat{b}$& -0.0190 (0.0015) & 0.0027 (0.0015) & -0.0204 (0.0017) &  0.0122 (0.0013) \\
 &&&$\hat{b}-\hat{c}$& 0.0009 (0.0015) & 0.0147 (0.0015) & 0e-04 (0.0017) & 0.0208 (0.0013)\\
&$4$ && $\hat{b}$& -0.0204 (0.0015) & -0.0204 (0.0015) & -0.0204 (0.0017)& -0.0204 (0.0014)\\
%\hline 
&&& $\hat{b}-\hat{c}$ & 0e-04 (0.0015) & 0e-04 (0.0015) & 0e-04 (0.0017)& 0e-04 (0.0013)\\
\hline 
\end{tabularx}
%  \label{tab:s}
\end{tableminipage}
%\end{tabular}
\end{table*}

\subsection{Scenario 2}

In this set-up we also include admixed individuals. We simulated samples from two ancestral populations and  individuals  that are a mix of the two. We then applied  all three PCA procedures and the software ADMIXTURE to the data. Specifically, we choose
$$Q=\begin{pmatrix}
 {1}_{n_{1}} & \frac{1}{2} {1}_{n_{2}} &0 \\
0& \frac{1}{2} {1}_{n_{2}} & {1}_{n_{3}} 
\end{pmatrix},$$
with  $k=2$ true ancestral populations, and  $(n_1, n_2, n_3)=(20,20,20)$ or $(n_1, n_2, n_3)=(10,20,30)$, see the previous section %\Cref{sec:simuldetails} 
for details. We analysed the data with $k^{\prime}=1,2,3$,  and obtained the correlation structure shown in  Figures~\ref{Fig.2} and \ref{Fig.3}, and Table~\ref{tab:my_label1}. The two standard approaches  PCA 2 and PCA 3 show almost identical results, hence only PCA 2 is shown in the figures. Both PCA 2 and PCA 3 use the top principal components, while PCA 1 disregards the first, hence the discrepancy in the axis labeling in Figures~\ref{Fig.2}(b) and \ref{Fig.3}/b).  %capture the population structure on all top principal components which means that $k'=3$ corresponds to the top two principal components while when assuming only one ancestral population 
For $k'=1$ none of the principal components are used and the predicted normalized genotypes is simply 0. 
All four methods show consistent results, in particular, for the correct $k'$ ($=2$), while there are smaller discrepancies between the methods for wrong $k'=1,3$. This is most pronounced for   PCA 1 and ADMIXTURE. We note that the average correlation coefficient of $\widehat b$ within each population sample comply with Theorem~\ref{thm:BC} (see Table~\ref{tab:my_label1}). 
A fairly homogeneous pattern in the corrected correlation coefficients appears around zero across all samples for $k'=2$, as in scenario 1, which shows that the model fits well. However, unlike in scenario 1 the bias for the empirical correlation coefficient is not a simple function of the sample size  (see  Table~\ref{tab:my_label1}).

In this case, and similarly in all other investigated cases, we don't find any big discrepancies between the four methods. Therefore, we only show the results of PCA 1 for which we have theoretical justification for the results.

\begin{figure*}[!ht]
\centering
%\captionsetup{font={small},{labelfont=bf}}
\includegraphics[width=0.8\linewidth]{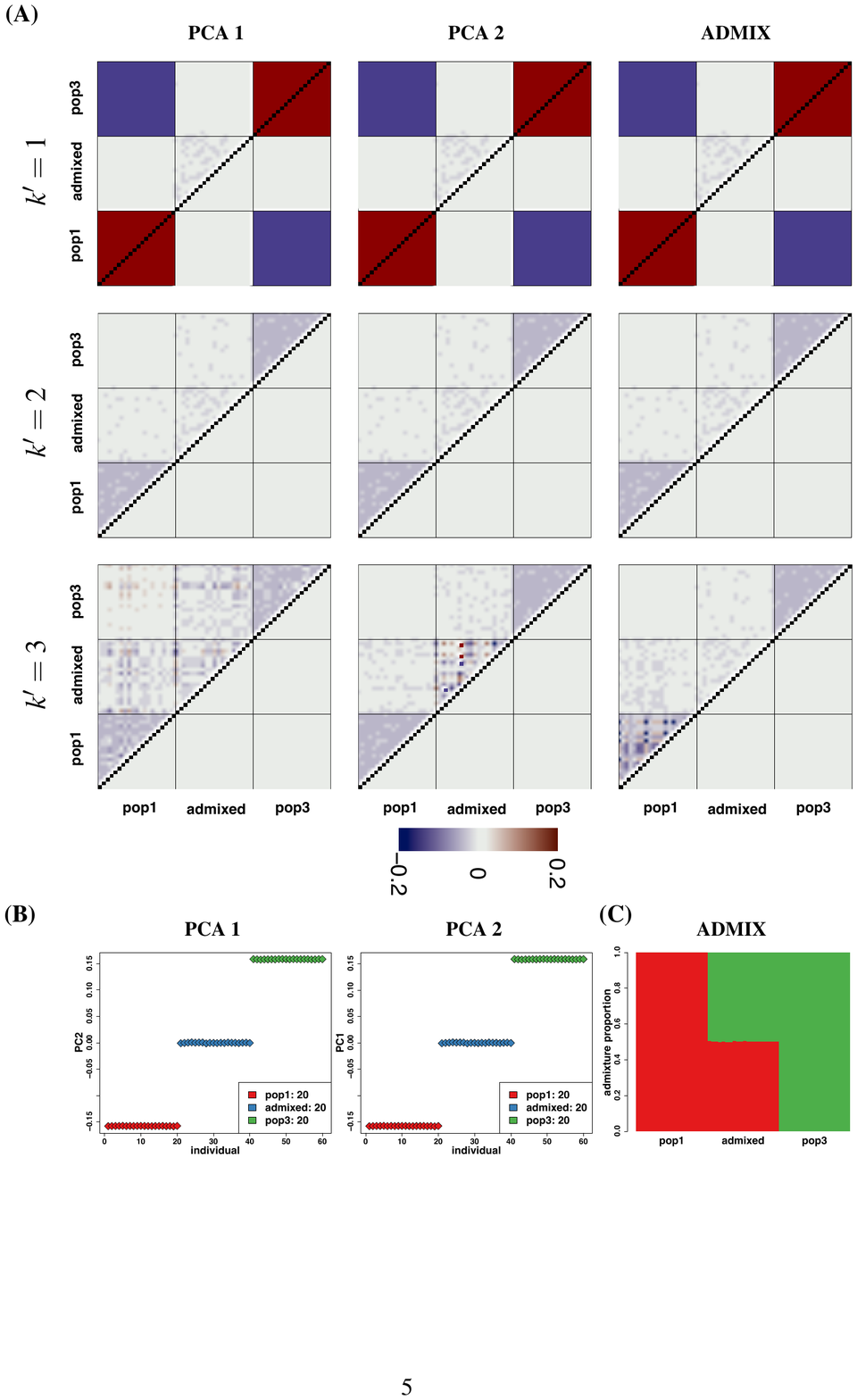}
\caption{Results for simulated Scenario 2 with equal sample sizes. (A) For each of PCA 1, PCA 2 and ADMIXTURE, the upper left triangle in  the   plots shows the empirical correlation $\hat{b}$ and the lower right triangle shows the difference $\hat{b}-\hat{c}$ with sample sizes  $(n_1, n_2, n_3)=(20,20,20)$. (B) The major principal component for the PCA based methods for $k'=2$ (in which case there is only one principal component). Individuals within each sample have the same color. (C) The estimated admixture proportions in the case of ADMIXTURE. }
\label{Fig.2}
\end{figure*}

\begin{figure*}[!ht]
\centering
%\captionsetup{font={small},{labelfont=bf}}
\includegraphics[width=0.8\linewidth]{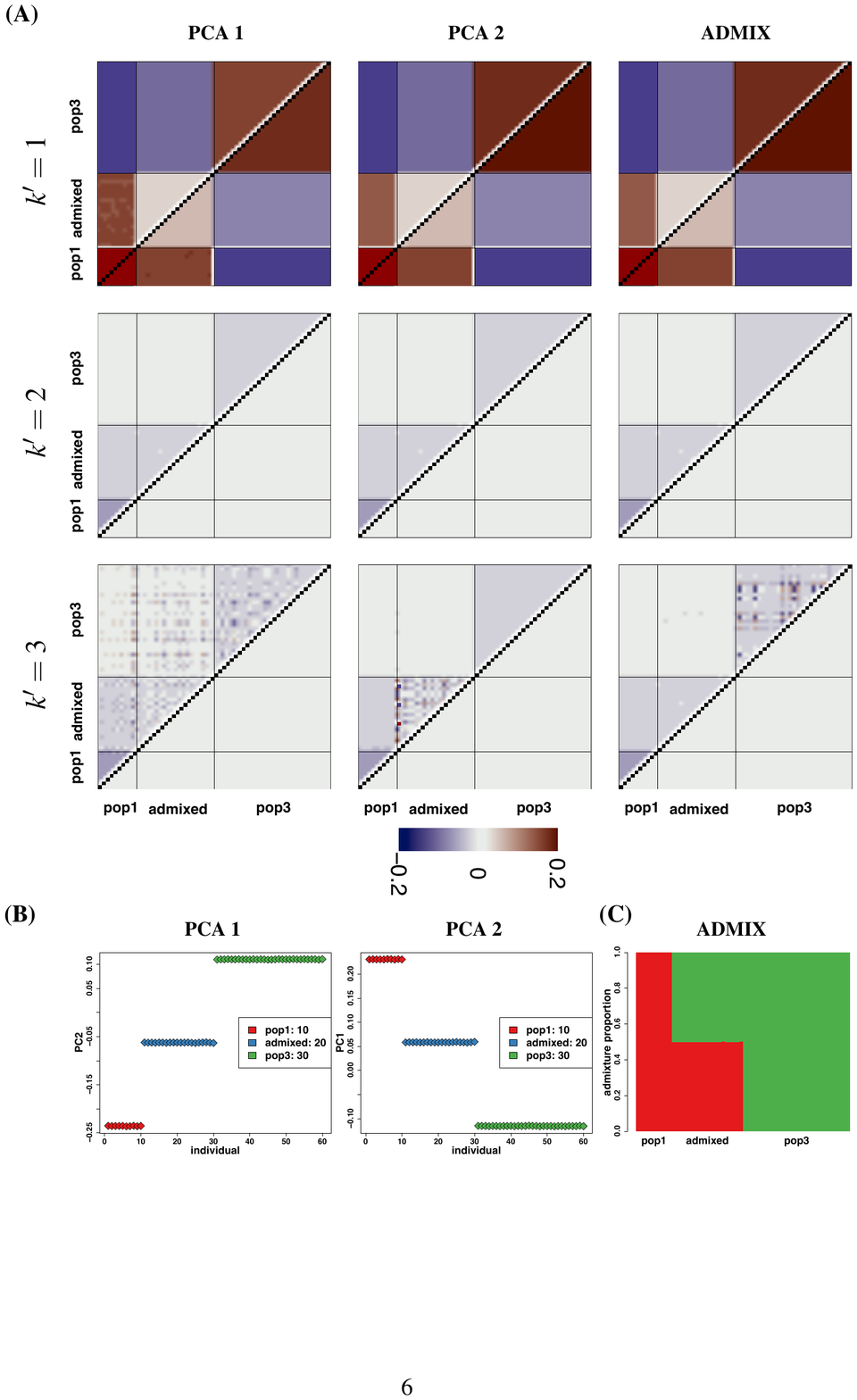}
\caption{Results for simulated Scenario 2 with unequal sample sizes. (A) For each of PCA 1, PCA 2 and ADMIXTURE, the upper left triangle in  the   plots shows the empirical correlation $\hat{b}$ and the lower right triangle shows the difference $\hat{b}-\hat{c}$ with sample sizes  $(n_1, n_2, n_3)=(20,20,20)$. (B) The major principal component for the PCA based methods for $k'=2$ (in which case there is only one principal component). Individuals within each sample have the same color. (C) The estimated admixture proportions in the case of ADMIXTURE.  }
\label{Fig.3}
\end{figure*}

\subsection{Scenario 3}
We simulated genotypes for $n=500$ individuals at $m=88,082$ sites with continuous genetic flow between individuals, thus there is not a true $k$. We analysed the data assuming $k'=2,3$, see Figure~\ref{Fig.4}. In the figure, the individuals are ordered according to the estimated proportions of the ancestral populations, hence it appears there is a color wave pattern in the empirical and the corrected correlation coefficients, see Figure~\ref{Fig.4}(A). As expected, the corrected correlation coefficients are closer to zero for $k'=3$ than $k'=2$, though the deviations from zero are still large. We thus find no support for the model for either value of $k'$. This is consistent with the plots of the major PCs, that show continuous change without grouping the data into two or three clusters, see Figure~\ref{Fig.4}(B). 

\begin{figure}[!ht]
\centering
%\captionsetup{font={small},{labelfont=bf}}
\includegraphics[width=1\linewidth]{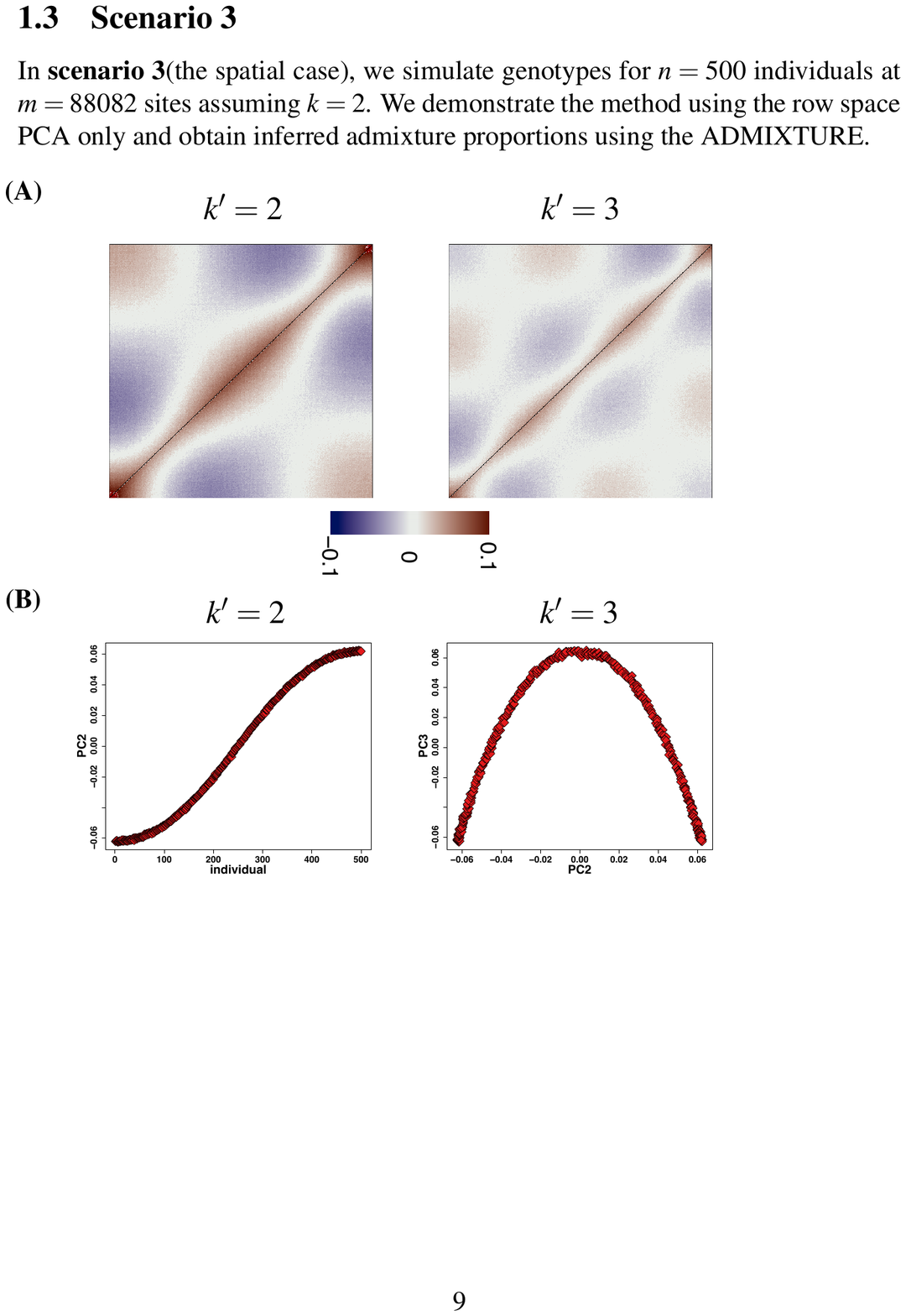}
\caption{Results for simulated scenario 3. (A) The upper triangle in the plots shows the empirical correlation $\hat{b}$ and the lower triangle shows the difference $\hat{b}-\hat{c}$. (B) The major principal components (only one in the case of $k'=2$). }
\label{Fig.4}
\end{figure}

\subsection{Scenario 4}
This case is based on the tree in Figure~\ref{Fig.5}, which include an unsampled (so-called) ghost population, popGhost.  The popGhost is   sister population to pop1.
\begin{figure}[!ht]
\centering
%\captionsetup{font={small},{labelfont=bf}}
\includegraphics[width=0.5\linewidth]{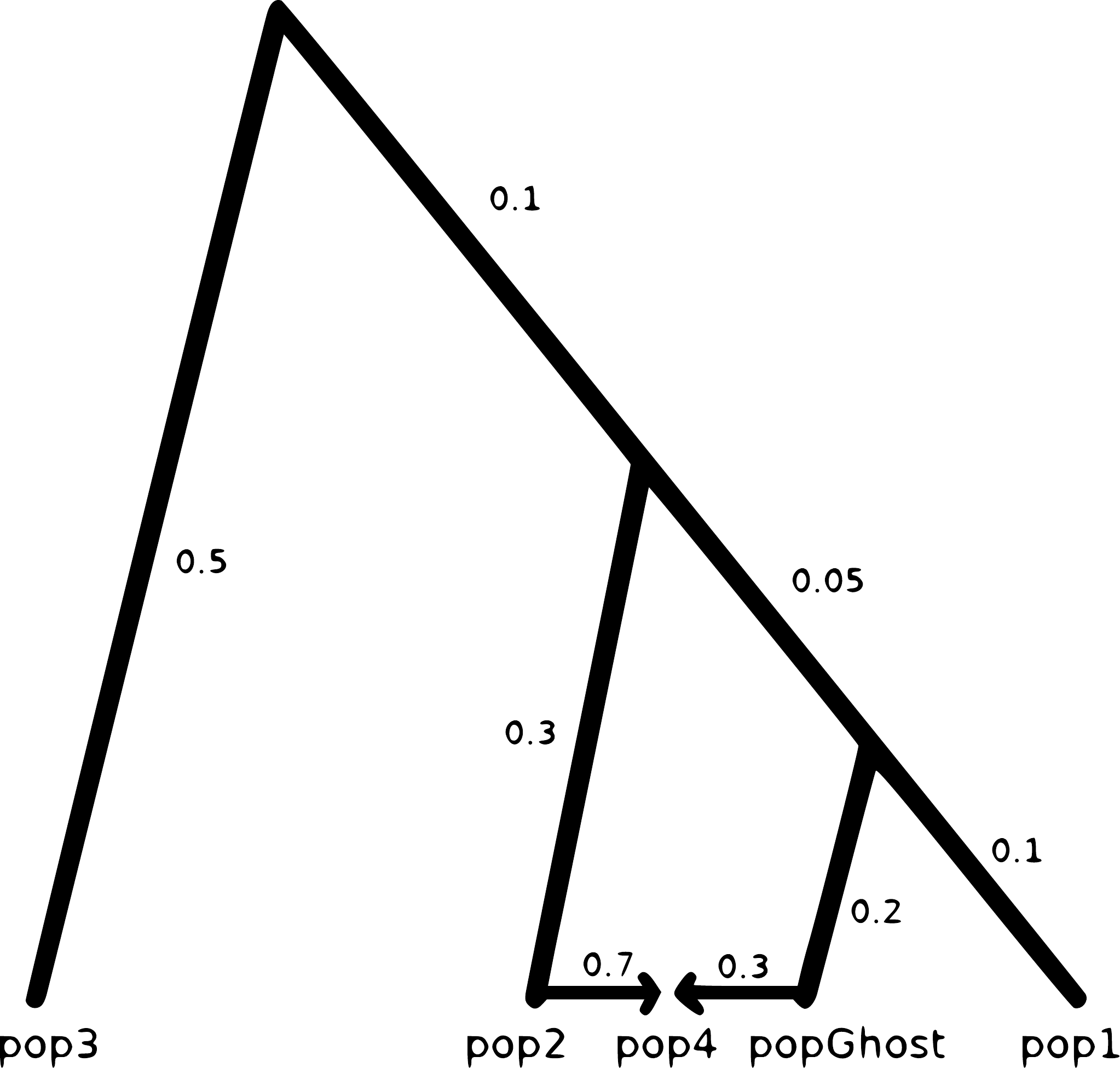}
\caption{Schematic of the tree used to simulate population allele frequencies for Scenario 4, including 5 populations: pop1, pop2, pop3, pop4 and popGhost. The pop4 population is the result of admixture between pop2 and popGhost, for which there are no individuals sampled and is therefore a ghost population. The values in the branches indicate the drift in units of $F_{ST}$. The values along the two admixture edges are the admixture proportions coming from each population. }
\label{Fig.5}
\end{figure}

We simulated genotypes for $n=200$ individuals: 150 unadmixed samples from pop1, pop2, and pop3; and 50 samples admixed with 0.3 ancestry from popGhost and 0.7 ancestry from pop2 (as pop4), as detailed in the previous section. %\ref{sec:simuldetails}. 
As there is drift between the populations and hence genetic differences, the correct $k=4$ (pop1, pop2, pop3, popGhost). This is picked up by our method that clearly shows $k'=3$ is wrong with large deviation from zero in the corrected correlation coefficients. In contrast, for $k'=4$, the corrected correlation coefficients are almost zero (Figure~\ref{Fig.6}).  

%\begin{table}[H]
%\captionsetup{font={small},{labelfont=bf}}
%\caption{Scenario 4. The mean (standard deviation) of $\hat{b}$ and $\hat{b}-\hat{c}$ within each population using PCA 1. }
%\label{tab:my_label2}
%\centering
%\begin{tabular}{cccccc}
%\hline 
% & &\bf pop1 &\bf pop2 &\bf pop3 &\bf pop4\\
%\hline 
%$\hat{b}$&$k^\prime=3$ & -0.0190 (0.0015) & 0.0027 (0.0015) & -0.0204 (0.0017) &  0.0122 (0.0013)\\
%&$k^\prime=4$ & -0.0204 (0.0015) & -0.0204 (0.0015) & -0.0204 (0.0017)& -0.0204 (0.0014)\\
%\hline 
%$\hat{b}-\hat{c}$&$k^\prime=3$ & 0.0009 (0.0015) & 0.0147 (0.0015) & 0e-04 (0.0017) & 0.0208 (0.0013)\\
%&$k^\prime=4$& 0e-04 (0.0015) & 0e-04 (0.0015) & 0e-04 (0.0017)&0e-04 (0.0013)\\
%\hline 
%\end{tabular}
%\end{table}

\begin{figure}[!ht]
\centering
%\captionsetup{font={small},{labelfont=bf}}
\includegraphics[width=0.8\linewidth]{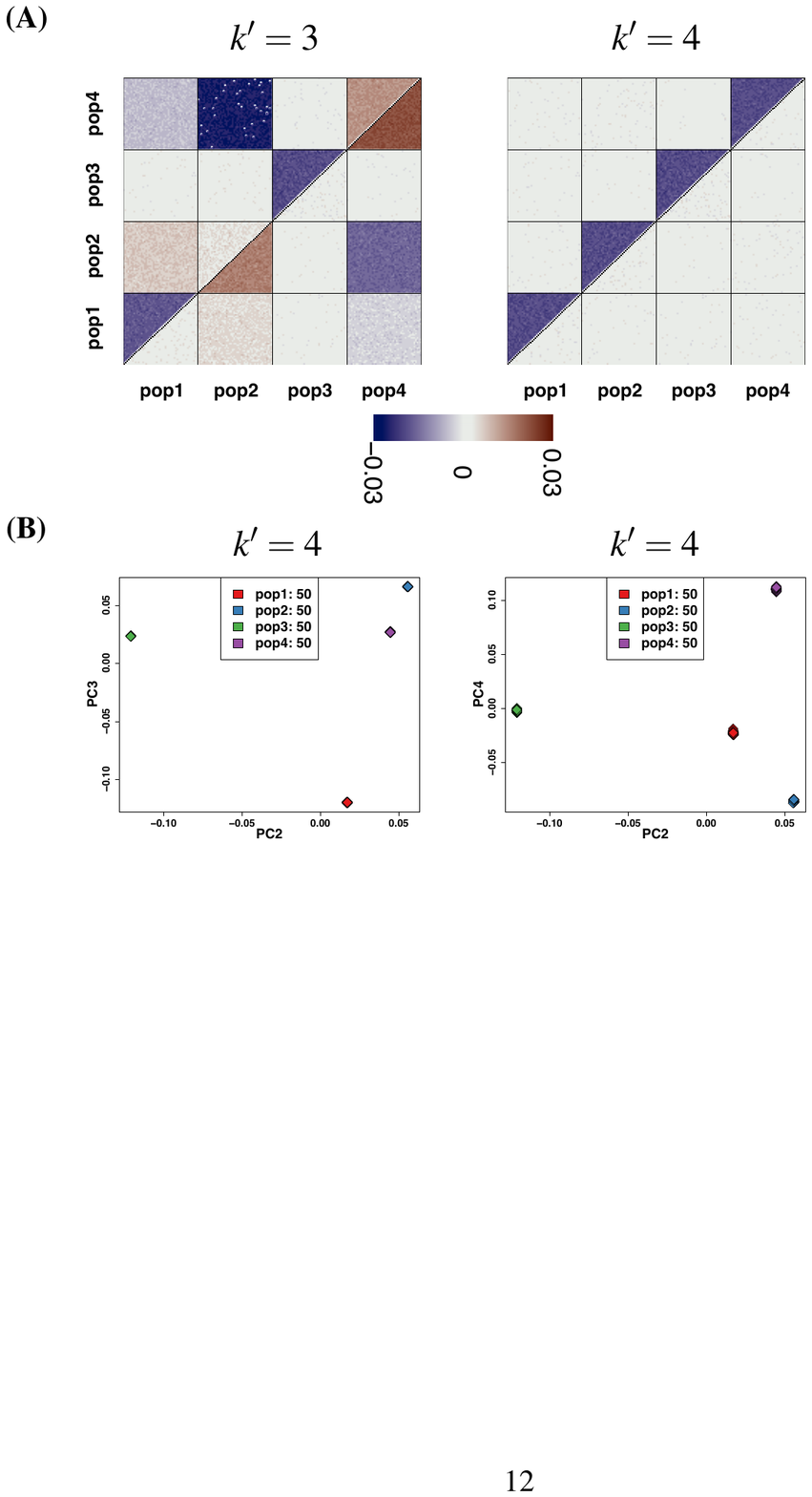}
\caption{Results for simulated scenario 4. (A) The upper triangle in the plots shows the empirical correlation $\hat{b}$ and the lower triangle shows the difference $\hat{b}-\hat{c}$. (B) The major principal components for $k'=4$, that result in a clear separation of the four samples (all data points within each sample are almost identical). }
\label{Fig.6}
\end{figure}

\subsection{Scenario 5}
In the last   example, we simulated two populations (originating from a common ancestral population) and created admixed populations by backcrossing, as detailed in the previous section. %\Cref{sec:simuldetails}. 
Thus, the model does not fulfil the assumptions of the admixture model in that the number of reference alleles are not binomially distributed, but depends on the particular backcross and the frequencies of the parental populations.

 We simulate genotypes for $n=90$ individuals at $m=500,000$ sites. There are   20 homogeneous individuals from each parental population, and 10 different individuals from each of the  different recent admixture classes. Then, we analysed the data with $k'=2$ and found the corrected correlation coefficients deviated consistently from zero, in particular for one of the parental populations (Figure~\ref{Fig.7}).  We are thus able to say the admixture model does not provide a reasonable fit.

\begin{figure}[!ht]
\centering
%\captionsetup{font={small},{labelfont=bf}}
\includegraphics[width=0.8\linewidth]{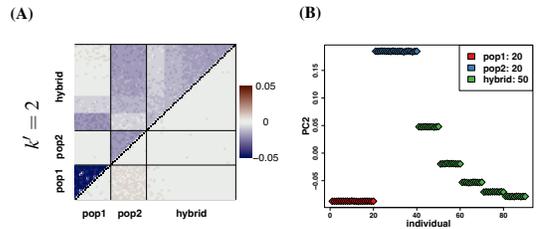}
\caption{Results for simulated scenario 5 (recent admixture). (A) The upper triangle in the plots shows the empirical correlation $\hat{b}$ and the lower triangle shows the difference $\hat{b}-\hat{c}$. (B) The major principal component for $k'=2$. }
\label{Fig.7}
\end{figure}

\subsection{Real data}%\label{subsec:1000genomesdata}

We analysed a  whole genome sequencing data set from the 1000 Genomes Project  \citep{Auton}, see also  \citet{genisanders2020} where the same data is used. It consists of data from five groups of different descent:  a Yoruba group from Ibadan,  
Nigeria (YRI), residents from Southwest US with African ancestry  (ASW), Utah residents with Northern and Western European ancestry (CEU),   a group with Mexican ancestry from Los Angeles, California (MXL), and a  group of Han Chinese from Beijing, China (CHB) with sample  sizes $108, 61, 99, 63$ and $103$, respectively, in total, $n=434$. We kept only sites present in the Human Origins SNP panel \citep{lazaridis2014}, with a  total of $m=406,279$ SNPs were left after a MAF filter of 0.05.  

We analyzed the data with $k'=3,4$. For $k'=3$, Figure~\ref{Fig.8} shows that it is not possible to explain the relationship between MXL, CEU and CHB, indicating that MXL is not well explained as a mixture of the two. For $k'=4$, the color shades of the corrected correlation coefficients  are almost negligible within each population, pointing at a contribution from a native american population. This is further corroborated in Figure~\ref{Fig.8}(D) that shows estimated proportions from the four ancestral populations using the software ADMIXTURE. 

\begin{figure}[!ht]
\centering
\captionsetup{font={small},{labelfont=bf}}
\includegraphics[width=0.8\linewidth]{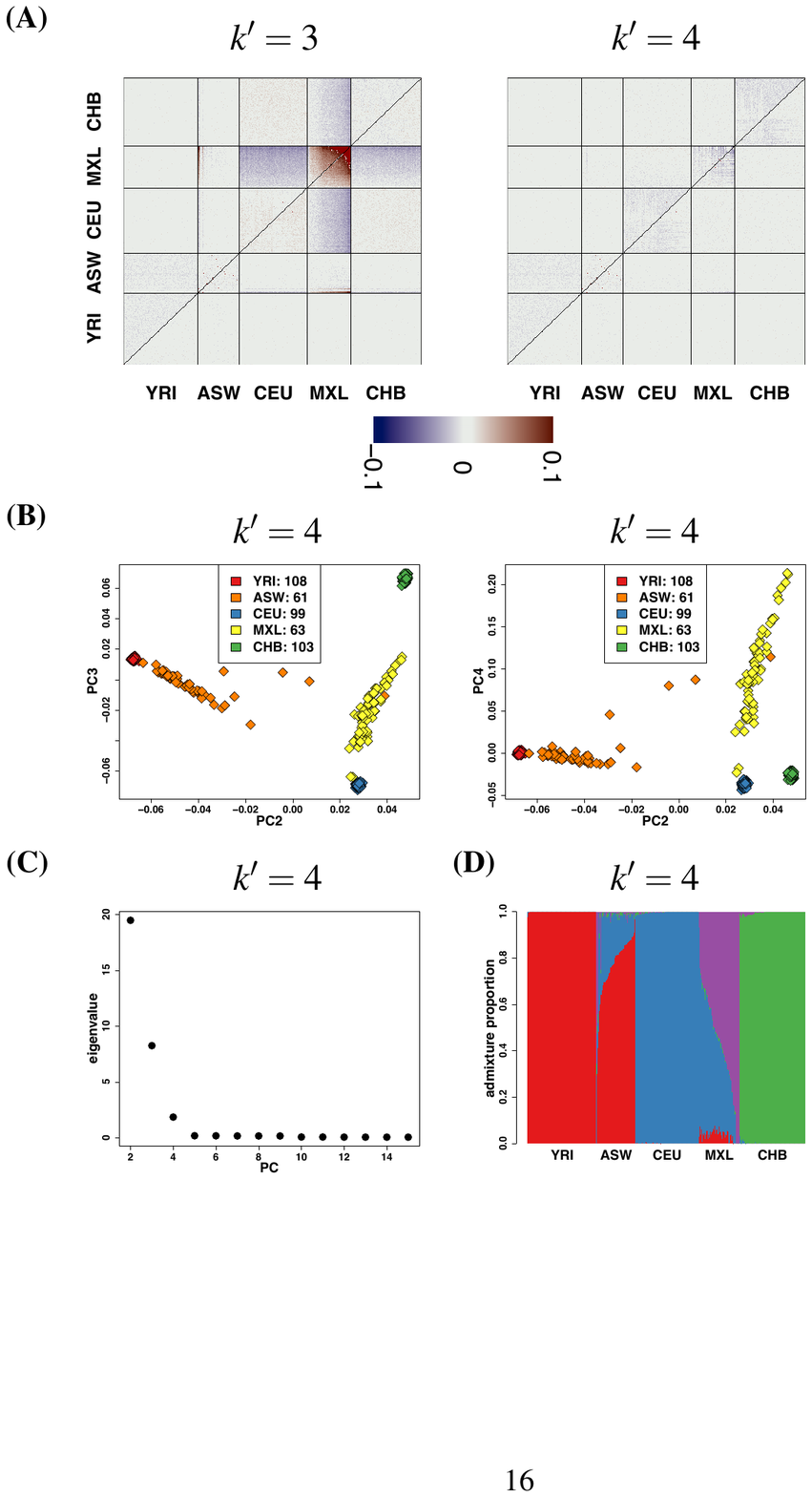}
\caption{The residual correlation coefficient, the inferred population structure and the admixture proportions of a real human data from 1000 Genomes project. (A) The upper triangle in the plots shows the empirical correlation coefficient $\hat{b}$ and the lower triangle shows the difference $\hat{b}-\hat{c}$. (B) The three major principal component for  PCA 1   for $k'=4$. (C) The eigenvalue for the first PC is removed and the eigenvalues corresponding to the remaining PCs are close to 0 after the forth PC. (D) The admixture proportions as estimated with ADMIXTURE. }
\label{Fig.8}
\end{figure}

\section{Discussion}%\label{sec:discussion}

We have developed a novel approach to assess model fit of PCA and the admixture model based on structure of the residual correlation matrix. We have shown that it performs well for simulated and real data, using a suit of different PCA methods, commonly used in the literature, and the ADMIXTURE software to estimate model parameters. By assessing the residual correlation structure visually, one is able to detect model misfit and violation of modelling assumptions.

The model fit is assessed by comparing visually two matrices of residual correlation coefficients. The theoretical and practical advantage of our approach lie in three aspects. First, our approach is computationally simple and fast. Calculation of the two residual correlation matrices and their difference is computationally inexpensive. Secondly, our approach provides a unified approach to model fitting based on PCA and clustering methods (like ADMIXTURE). In particular, it provides simple means to assess the adequacy of the chosen number of top principal components to describe the structure of the data. Assessing the adequacy by plotting the principal components against each other might lead to false confidence. In contrast, our approach exposes model misfit by plotting the difference between two matrices of the residual correlation coefficients.
Thirdly, it comes with theoretical guarantees in some cases. These guarantees are further back up by simulations  in cases, we cannot provide theoretical validity.
Finally, our approach might be adapted to work on NGS data without estimating genotypes first, but working directly on genotype likelihoods.  
 
%%%%%%%%%%%%%%%%
\section*{Data availability}
The data sets used in this study are all publicly available, including simulated  and real data. Information about the R code used to analyze and simulate  data is available at \url{https://github.com/Ginwaitthreebody/evalPCA}. The variant calls for the 1000 Genomes Project data used are publicly available at \url{ftp://ftp.1000genomes.ebi.ac.uk/vol1/ftp/release/20130502/}.

\section*{Acknowledgements}
 
The authors are supported by the Independent Research Fund Denmark (grant number:  8021-00360B) and the University of Copenhagen through the Data+ initiative.  SL acknowledges the financial support from the funding agency of China Scholarship Council. GGE and AA are supported by the Independent Research Fund Denmark (grant numbers: 8049-00098B and DFF-0135-00211B respectively).

 \appendix

 \section*{Appendix A}%\label{sec:math}

We first state the expectation and covariance matrix of $G_{s\star}$ and $\Pi_{s\star}$, respectively, under the given distributional assumptions,
\begin{align*}
\E[\Pi_{s\star}]&= \mu^TQ,\qquad
\cov(\Pi_{s\star}) =Q^T\Sigma Q,\\
\E[G_{s\star}]&=2\E[F_{s\star}Q]=2\mu^TQ,\\
\cov(G_{s\star})&=\E[\cov(G_{s\star}\mid \Pi)] +4\cov(\Pi_{s\star})
=D + 4Q^T\Sigma Q,
\end{align*}
for $s=1,\ldots,m$, where 
\begin{align*}\label{eq:D}
D&=2\E[ \diag(\Pi_{s1}(1-\Pi_{s1}),\ldots,\Pi_{sn}(1-\Pi_{sn}))],  
\end{align*}
and 
$$\E[  \Pi_{si}(1-\Pi_{si})]=\mu^TQ_{\star i} - (\mu^TQ_{\star i})^2 - (Q^T\Sigma Q)_{ii}.$$
The unconditional columns $G_{s\star}$, $s=1,\ldots,m$, of $G$ are independent random vectors by construction.

The above implies that
\begin{equation}\label{eq:star}
\frac 1m\E[G^TG]=D+4Q^T(\Sigma+\mu\mu^T)Q.
\end{equation}

Auxiliary results are  in appendix B. %\Cref{sec:proofs}.

\begin{lemma}\label{thm:unbiasedD}
    The estimator $\widehat D$  is an unbiased estimator of $D$, that is, $\E[\widehat D]=D$.
Furthermore,  it holds that $\widehat D\to D $   
as $m\to\infty$.
\end{lemma}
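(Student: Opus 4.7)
The statement has two parts: unbiasedness and almost sure convergence of $\widehat D$ to $D$. Since both $\widehat D$ and $D$ are diagonal by construction, it suffices to handle each diagonal entry $\widehat D_{ii}$ separately; the plan is to reduce both parts to standard computations for the binomial distribution combined with the strong law of large numbers.

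For unbiasedness, the plan is to condition on $\Pi$ and use that $G_{si}\mid\Pi\sim\binomial(2,\Pi_{si})$. A direct calculation then gives
\begin{align*}
\E[G_{si}(2-G_{si})\mid \Pi]
&= 2\E[G_{si}\mid\Pi] - \E[G_{si}^2\mid\Pi]\\
&= 4\Pi_{si} - \bigl(2\Pi_{si}(1-\Pi_{si}) + 4\Pi_{si}^2\bigr)\\
&= 2\Pi_{si}(1-\Pi_{si}).
\end{align*}
Taking expectations on both sides and using the definition of $D_{ii}$ in \eqref{eq:Dvar} yields $\E[G_{si}(2-G_{si})]=D_{ii}$, so averaging over $s=1,\ldots,m$ gives $\E[\widehat D_{ii}]=D_{ii}$, as desired.

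For the almost sure convergence, the plan is to observe that for each fixed $i$ the random variables $G_{si}(2-G_{si})$, $s=1,\ldots,m$, are independent and identically distributed. Indeed, by assumption the rows $F_{s\star}$ are iid, $Q$ is non-random, and conditional on $F$ the entries of $G$ are independent; hence the joint law of $(F_{s\star},G_{s\star})$ is the same for all $s$ and independent across $s$. Since $G_{si}\in\{0,1,2\}$, the summands are bounded (hence integrable). The strong law of large numbers then gives
\[
\widehat D_{ii} \;=\; \frac1m\sum_{s=1}^m G_{si}(2-G_{si}) \;\to\; \E[G_{s1,i}(2-G_{s1,i})] \;=\; D_{ii}
\]
almost surely as $m\to\infty$, for each $i=1,\ldots,n$. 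Since a finite collection of almost sure statements holds simultaneously on the intersection of their probability-one events, we obtain $\widehat D\to D$ entry-wise almost surely.

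There is no real obstacle here; the only subtlety worth flagging is the iid-across-$s$ claim, which must be justified by the iid assumption on the rows of $F$ together with the conditional independence of the entries of $G$ given $\Pi$. Once this is in place, both conclusions follow by a short calculation and a direct application of the SLLN.
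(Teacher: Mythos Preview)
Your proposal is correct and follows essentially the same route as the paper's proof: condition on $\Pi$ to compute $\E[G_{si}(2-G_{si})\mid\Pi]=2\Pi_{si}(1-\Pi_{si})$ for unbiasedness, then invoke the strong law of large numbers using that the summands are iid across $s$. Your write-up is in fact more explicit than the paper's, particularly in justifying the iid-across-$s$ claim from the iid assumption on the rows of $F$ and the conditional independence of $G$ given $\Pi$.
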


\begin{proof}
Conditional on $\Pi_{si}$, using binomiality, we have $\E[G_{si}(2-G_{si})\mid\Pi_{si}]= 
2\Pi_{si}(1-\Pi_{si})$, and the first result follows. For convergence, note that $G_{si}(2-G_{si})$, $s=1,\ldots,m$, unconditionally, form a sequence of iid random variables with finite variance, hence the convergence statement follows from the strong Law of Large Numbers \citep{JACOD-PROTTER}.
\end{proof}

\begin{lemma}\label{thm:hatHisunbiased}
    The estimator $\widehat H = \frac1mG^TG-\widehat D$  is an unbiased estimator of $H=4Q^T(\Sigma + \mu\mu^T)Q$, that is, $\E[\widehat H]=H$.
Furthermore,  it holds that $\widehat H\to 4Q^T(\Sigma+\mu^T\mu)Q $   
as $m\to\infty$, and
    \[
    \E\vh{\|\widehat H-4Q^T(\Sigma+\mu\mu^T)Q\|_F^2}\le \frac{16n^2}{m}.
    \]
\end{lemma}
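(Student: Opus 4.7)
The proof breaks into the three claims: unbiasedness, almost sure convergence, and the mean-squared Frobenius error bound. The key structural observation that drives all three is that under the stated assumptions, the rows $G_{s\star}$, $s=1,\dots,m$, of $G$ are i.i.d.\ random vectors with bounded entries (each in $\{0,1,2\}$); consequently, for any fixed $i,j$, the products $G_{si}G_{sj}$ and the heterozygosities $G_{si}(2-G_{si})$ form i.i.d.\ sequences in $s$, each bounded in absolute value by $4$.

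For unbiasedness I would simply invoke equation \eqref{eq:star}, which gives $\tfrac1m\E[G^TG]=D+4Q^T(\Sigma+\mu\mu^T)Q$, together with $\E[\widehat D]=D$ from Lemma~\ref{thm:unbiasedD}; subtraction yields $\E[\widehat H]=4Q^T(\Sigma+\mu\mu^T)Q=H$. For the almost sure convergence statement, I would apply the strong law of large numbers entrywise: each entry of $\tfrac1m G^TG$ is the sample mean of the i.i.d.\ bounded variables $G_{si}G_{sj}$, and hence converges almost surely to $(D+4Q^T(\Sigma+\mu\mu^T)Q)_{ij}$. Combining this with $\widehat D\to D$ (Lemma~\ref{thm:unbiasedD}) gives $\widehat H\to H$ almost surely.

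For the Frobenius bound, I would write
\[
\E\bigl[\|\widehat H-H\|_F^2\bigr]=\sum_{i,j=1}^n\E\bigl[(\widehat H_{ij}-H_{ij})^2\bigr]=\sum_{i,j=1}^n\var(\widehat H_{ij}),
\]
where the last equality uses the unbiasedness just established. For $i\ne j$, $\widehat H_{ij}=\tfrac1m\sum_s G_{si}G_{sj}$ is a sample mean of i.i.d.\ variables bounded by $4$, so $\var(\widehat H_{ij})\le 16/m$. For $i=j$, the diagonal term becomes $\widehat H_{ii}=\tfrac1m\sum_s (2G_{si}^2-2G_{si})$, and the summands again lie in $[0,4]$, so the same bound $\var(\widehat H_{ii})\le 16/m$ applies. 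Summing over the $n^2$ entries gives $\E[\|\widehat H-H\|_F^2]\le 16n^2/m$.

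Honestly, there is no serious obstacle: once one notices that the rows of $G$ are i.i.d.\ and bounded, every step is a direct application of linearity of expectation, the strong law of large numbers, and the crude variance bound $\var(X)\le\E[X^2]$ for bounded $X$. The only point requiring a moment's care is handling the diagonal entries, where the $\widehat D$ correction changes the summand but not the $\{0,1,2\}$-boundedness, so the same constant $16$ controls the variance.
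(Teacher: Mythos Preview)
Your proposal is correct and matches the paper's proof essentially line for line: unbiasedness via \eqref{eq:star} and Lemma~\ref{thm:unbiasedD}, almost sure convergence by the entrywise strong law of large numbers, and the Frobenius bound by reducing to $\sum_{i,j}\var(\widehat H_{ij})$ and bounding each summand by $16/m$ using that the i.i.d.\ terms $G_{si}G_{sj}$ (off-diagonal) and $2G_{si}(G_{si}-1)$ (diagonal) lie in $[0,4]$. The paper merely writes out the centered summands $A_{s,ij}$ explicitly and expands the double sum rather than invoking the word ``variance,'' but the computation is identical.
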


\begin{proof}
Unbiasedness follows from \eqref{eq:star} and Lemma~\ref{thm:unbiasedD}. Consider the $(i,j)$-th entry of $\tfrac 1m G^TG$, namely, $\tfrac1m\sum_{s=1}^m G_{si}G_{sj}$. The sequence  $G_{si}G_{sj}$, $s=1,\ldots,m$, is iid with finite variance, hence $ \tfrac 1m G^TG$ converges  to $\E[G^TG]$ as $m\to\infty$ by the strong Law of Large Numbers \citep{JACOD-PROTTER}. Combined with Lemma~\ref{thm:unbiasedD} gives convergence of $\widehat H$ to $H$  as $m\to\infty$.

It remains to prove the inequality. Define
\begin{align*}
    A_{s,ij}=\begin{cases}
      G_{si}G_{sj}  - 4(Q^T(\Sigma+\mu\mu^T)Q)_{ij}
        & \text{ if } i\neq j,\\
        2G_{si}(G_{si}-1)  - 4(Q^T(\Sigma+\mu\mu^T)Q)_{ii} & 
       \text{ if } i=j.
    \end{cases} 
\end{align*}
Then,  
\begin{align*} 
(\widehat H_{ij} - \E[ \widehat H_{ij}])^2   &= \rh{\frac1m \sum_{s=1}^m  A_{s,ij}}^2 =\frac1{m^2} \sum_{s=1}^m\sum_{t=1}^m  A_{s,ij}A_{t,ij},   \\
    \|\widehat H-\E[\widehat H]\|_F^2 & =\frac1{m^2}\sum_{i=1}^n\sum_{j=1}^n \sum_{s=1}^m \sum_{t=1}^m A_{s,ij}A_{t,ij}.  
\end{align*}
Using $\E[A_{s,ij}]=0$, independence of $A_{s,ij}$ and $A_{t,ij}$ for  $s\neq t$,  and $|A_{s,ij}|\le 4$,   we have  
\begin{align*}
\E[    \|\widehat H-\E[\widehat H]\|_F^2]  
    & =\frac1{m^2}  \sum_{i=1}^n\sum_{j=1}^n \sum_{s=1}^m \E[A_{s,ij}^2]\le \frac1{m^2} 16mn^2 =\frac{16n^2}m,
\end{align*}
which proves the claim. 
\end{proof}

The convergence result  is also in \citet[theorem 2]{chenstorey2015}.   The second part provides the rate of convergence of $\widehat H$ in the $L^2$-norm.
Convergence is contingent on large $m$, rather than large $n$, and requires $m$ to increase at least like the square of $n$.

\medskip

\noindent
{\bf Proof of Theorem~\ref{thm:BC}.} Since $\widehat P_{k'}$ is assumed to be an orthogonal projection, that is, $\widehat P_{k'}^2=\widehat P_{k'}$ and $\widehat P_{k'}^T=\widehat P_{k'}$, then also the limit is an orthogonal projection,  $P_{k'}^2=P_{k'}$ and $P_{k'}^T=P_{k'}$. 

Consider the empirical covariance $\widehat B$.   Define  the variables  $T_{k'}=G(I-P_{k'})$ with $\widehat P_{k'}$ replaced by $P_{k'}$, and the empirical covariance
\begin{align*}
\widetilde B_{ij} &=\frac 1{m-1}\sum_{s=1}^m (T_{k',si}T_{k',sj}-\xbar T_{k',i} \xbar T_{k',j}) \\
&=\frac 1{m-1}\sum_{s=1}^m T_{k',si}T_{k',sj}-\frac{m}{m-1} \xbar T_{k',i} \xbar T_{k',j},
\end{align*}
defined similarly to  $\widehat B_{ij}$, with $\xbar T_{k',i}=\frac 1m\sum_{s=1}^m T_{k',si}.$  
The sequences $T_{k',si}T_{k',sj}$, $s=1,2,\ldots$, and  $T_{k',si}$, $s=1,2,\ldots$, are iid  random variables, by the distributional assumptions on $G$. Furthermore, since $ P_{k'}$ is an orthogonal projection, then $\|I-  P_{k'}\|^2_F\le n$  is bounded (Lemma~\ref{lem:normprojectionmatrix}).   Therefore, also  $T_{k',si}$ is bounded uniformly in $s,i$ by $2\sqrt{n}\le 2n$. 

Using  boundedness, independence and the strong Law of Large Numbers  \citep{JACOD-PROTTER}, 
\begin{align}\label{eq:T}
\widetilde B_{ij}&\to \E[T_{k',1i}T_{k',1j}]-\E[T_{k',1i}]\E[T_{k',1j}]=\cov(T_{k',1i},T_{k',1j}),
\end{align}
for $m\to\infty$, and $\cov(T_{k',1i},T_{k',1j})  =(I-P_{k'})(D+4 Q^T\Sigma Q)(I-P_{k'}).$ The latter equality follows  from \eqref{eq:star}.

Consider $R=G(I-\widehat P_{k'})=G(I-P_{k'})+G(P_{k'}-\widehat P_{k'})=T+G(P_{k'}-\widehat P_{k'})$. 
Hence,  
\begin{align*}
{\large|}\ {\xbar R}_{k',i}-{\xbar T}_{k',i}{\large|} &\le\frac1m \sum_{s=1}^m \sum_{i'=1}^n\sum_{j'=1}^n 2|(P_{k'}-\widehat P_{k'})_{i'j'}|\\
&=2 \sum_{i'=1}^n\sum_{j'=1}^n  |(P_{k'}-\widehat P_{k'})_{i'j'}|\to 0,
\end{align*}
as $m\to\infty$ by assumption of the theorem. It follows that ${\xbar R}_{k',i}$ converges  to $\E[T_{k',1i}]$ as $m\to\infty$.  Furthermore,
\begin{align*}
&   \frac 1{m-1}\sum_{s=1}^m  R_{k',si}R_{k',sj} - \frac 1{m-1}\sum_{s=1}^m  T_{k',si}T_{k',sj} \\
 & =  \frac 1{m-1}\sum_{s=1}^m  (T_{si} + (G(P_{k'}-\widehat P_{k'}))_{si})(T_{sj} + (G(P_{k'}-\widehat P_{k'}))_{sj}) \\
 & \quad - \frac 1{m-1}\sum_{s=1}^m  T_{k',si}T_{k',sj} \\
 % fix overbox. 
% hier
 & =  \frac 1{m-1}\sum_{s=1}^m  T_{si}(G(P_{k'}-\widehat P_{k'}))_{sj}+   \frac 1{m-1}\sum_{s=1}^m  (G(P_{k'}-\widehat P_{k'}))_{si}T_{sj}  \\
% hier
  &  \quad + \frac 1{m-1}\sum_{s=1}^m (G(P_{k'}-\widehat P_{k'}))_{si}(G(P_{k'}-\widehat P_{k'}))_{sj}.
\end{align*}
The absolute value of the first term in the last line above is bounded by 
$$\frac{4nm}{m-1}  \sum_{j'=1}^n|(P_{k'}-\widehat P_{k'})_{j'j}|,$$
 and similarly for the second term. The third is bounded by
$$\frac{4m}{m-1}  \sum_{i'=1}^n \sum_{j'=1}^n|(P_{k'}-\widehat P_{k'})_{i'i}||(P_{k'}-\widehat P_{k'})_{j'j}|.$$
All three terms converge to zero as $m\to\infty$, 
hence we conclude from \eqref{eq:T} that $\widehat B_{ij}\to \cov(T_{k',1i},T_{k',1j})$ as  $m\to\infty$.

The result for  the estimated covariance $\widehat C$ follows from convergence of $\widehat D$ and by assumption of the theorem.  The remaining part follows from the convergence of $\widehat B$ and $\widehat C$. Note that $QP=Q$, hence the second equation holds. The last statement of the theorem follows directly.   %\qed

\medskip
\noindent
{\bf Proof of Theorem~\ref{thm:n-1}.} 
Consider $T_k=G(I-P_k)=G(1-P)$, where $P=Q^T(QQ^T)^{-1}Q$ is the projection onto the row space of $Q$. Then, $T_k$ contains the residuals under multiple regression of the $m$ rows of $G$ on the $k$ rows of $Q$  \citep{box2005}. Since $e$ is in the row space of $Q$, then the sum of the residuals is zero for each $s=1,\ldots,m$: $\sum_{i=1}^n T_{k,si}=0$ (the assumption that $e$ is in the row space is   equivalent to having an intercept in the regression model) \citep{box2005}. We have, for $s=1,\ldots,m$,
\begin{align*}
0&=\var\left(\sum_{i=1}^n T_{k,si} \right) =\sum_{i=1}^n \var(T_{k,si}) + \sum_{i=1}^n \sum_{j=1,i\not=j}^n\cov(T_{k,si},T_{k,sj})  \\
&  =\sum_{i=1}^n \var(T_{k,1i}) + \sum_{i=1}^n \sum_{j=1,i\not=j}^n\cov(T_{k,1i},T_{k,1j}),
\end{align*}
since the distribution of $T_{k,si}$ is independent of $s$. From the proof of Theorem~\ref{thm:Pkconvergence}, it follows that $\widehat B$ converges to $\cov(T_{k,1\star})$ as $m\to\infty$. Hence,
\begin{align*}
\sum_{i=1}^n \widehat B_{ii}+\sum_{i=1}^n\sum_{j= 1,i\not=j }^n\widehat B_{ij} & \to \ 0,\quad\text{as}\quad m\to\infty,
\end{align*}
and the desired result follows by rearrangement. 

If $Q$ takes the given form, then the residuals under multiple regression are independent between compartments, as the projection is
$$P=\begin{pmatrix} P_1 & 0& \cdots &0\\ 0 & P_2   &\cdots &0\\ \vdots &\vdots & \ddots & \vdots \\ 0&0&\cdots& P_r\end{pmatrix},$$
where $P_\ell=Q_\ell^T(Q_\ell Q_\ell^T)^{-1}Q_\ell$ has dimension  $n_\ell\times n_\ell$. It follows that the computation above holds for each compartment. Finally, if $Q_\ell=(1 \ldots 1)$, then the distribution of the random vector $T_{k,1\star}$ is exchangeable, resulting in 
\begin{align*}
0&=\var\left(\sum_{i=1}^{n_\ell} T_{k,1i} \right) =\sum_{i=1}^{n_\ell} \var(T_{k,1i}) + \sum_{i=1}^{n_\ell} \sum_{j=1,i\not=j}^{n_\ell}\cov(T_{k,1i},T_{k,1j})\\
&=n_\ell \var(T_{k,11}) +n_\ell(n_\ell-1)\cov(T_{k,11},T_{k,12})
\end{align*}
assuming the individuals in the $\ell$-th compartment are numbered $1$ to $n_\ell$. Rearranging terms and substituting $\widehat b_{ij}$ for the moments of $T_{k,i\star}$ yields the desired result.
%\qed

\medskip
\noindent
{\bf Proof of Theorem~\ref{thm:sub}.}  Consider $T_k=G(I-P_k)=G(1-P)$, where $P=Q^T(QQ^T)^{-1}Q$ is the projection onto the row space of $Q$.  If $Q_1=(1 \ldots 1)$, then the distribution of the random variables $T_{k,11},\ldots,T_{k,1n_1}$ are exchangeable, resulting in 
\begin{align*}
0&\le \var\left(\sum_{i=1}^{n_1} T_{k,1i} \right) =\sum_{i=1}^{n_1} \var(T_{k,1i}) + \sum_{i=1}^{n_1} \sum_{j=1,i\not=j}^{n_1}\cov(T_{k,1i},T_{k,1j})\\
&=n_1\var(T_{k,11}) +n_1(n_1-1)\cov(T_{k,11},T_{k,12}).
\end{align*}
Rearranging terms and substituting $\widehat b_{ij}$ for the moments of $T_{k,i\star}$ yields the desired result.

\medskip
\noindent
{\bf Proof of Theorem~\ref{thm:Pkconvergence}.} 
The convergence statement of the theorem  is a special case of Theorem~\ref{thm:convergenceofprojection} in Appendix B. %\Cref{sec:proofs}. 
Take $A_m=\widehat H$ (that depends on the number of SNPs $m$, and the particular realization), $A=H$,  and $k=k'$ in the theorem ($k$ is used as a generic index in Theorem~\ref{thm:convergenceofprojection}). Then,  $E_kE_k^T=P_{k'}$ and $F_{m,k'}F_{m,k}^T=\widehat P_{k'}$, and the conclusion of Theorem~\ref{thm:Pkconvergence} holds.  
Convergence in Frobenius norm is equivalent to pointwise convergence (as $n$ is fixed)   $\widehat P_{k'}\to P_{k'}$ as $m\to\infty$ by definition.

If  $\Sigma+\mu\mu^T$ is positive definite, then it has rank  $k$. As $\text{rank}(Q)=k$ by assumption, it follows from Lemma~\ref{lem:rankk} that $\text{rank}(H)=k$. Consequently, there are $k$ positive eigenvalues of $H$ and $\lambda_{k+1}=0$, and the eigenvalue condition holds. %Oppositely,
Conversely, assume the  eigenvalue condition holds.   By definition $\text{rank}(H)\le k$. As $\lambda_k>\lambda_{k+1}\ge 0$ by assumption, then  also $\text{rank}(H)\ge k$ and we conclude $\text{rank}(H)= k$. It follows that the rank of $\Sigma+\mu\mu^T$ is $k$; consequently, it is positive definite.

If $k'=k=n$, then $P_k=V_kV_k^T=I$ and $P=I$ (as $k=n$), and $P_k=P$. So assume $k'=k<n$.  Since the eigenvalue condition is fulfilled, then from the above, we have $\text{rank}(Q^T(\Sigma+\mu\mu^T))=k$, and Lemma~\ref{lem:rankk} yields that the row space of $H$ and $Q$ agree.  Similarly, we have $H=V_k\diag(\lambda_1,\ldots,\lambda_k)V^T_k$ and Lemma~\ref{lem:rankk} yields that the row space of $H$ and $V_k^T$ agree. This implies the row space of $Q$ and $V_k^T$ agree.   Consequently, $P_k=Q^T(QQ^T )^{-1}Q=P$, and the statement holds. %\qed

\medskip
\noindent
{\bf Proof of Theorem~\ref{thm:Pkconvergence2}.} 
It follows trivially that $e$ is an eigenvector of $H_1$ with eigenvalue $0$. If $D$ has all entries positive, then it is positive definite and $D+4Q^T(\Sigma+\mu\mu^T)Q$ is also positive definite, hence has rank $n$. It follows from  Lemma~\ref{lem:rankk} that $H_1$ has rank $n-1$, hence $\lambda_{n-1}>0$.

 Similarly to the proof of Lemma~\ref{thm:hatHisunbiased} in Appendix B, %\Cref{sec:math}, 
 one can show $\E[\widehat H_1]=H_1$ and $\widehat H_1\to H_1$  as $m\to\infty$, where $H_1$ denotes the right hand side of \eqref{eq:H1}.  The remaining part of the theorem is proven similarly to Theorem~\ref{thm:Pkconvergence}. %\qed
 
 \medskip
\noindent
{\bf Proof of Theorem~\ref{thm:Pkconvergence3}.} 
Note that $e$ is an eigenvector of $H_1$ with eigenvalue $0$. Consider an eigenvector $v$ of $H_1$, orthogonal to $e$ with eigenvalue $\lambda$. Then, the following two equations are equivalent,
 \begin{align}
 \left(I-\frac 1nE\right)(D+ 4Q^T(\Sigma+\mu\mu^T)Q)\left(I-\frac 1nE\right)v&=\lambda v, \nonumber \\
4\left(I-\frac 1nE\right)Q^T(\Sigma+\mu\mu^T)Q\left(I-\frac 1nE\right)v&=(\lambda -d)v, \label{eq:dv} \
  \end{align}
where it is  used that $D=dI$ and $v\perp e$. It shows that $v$ is an eigenvector of $K=4(I-\tfrac 1nE)Q^T(\Sigma+\mu\mu^T)Q(I-\tfrac 1nE)$ with eigenvalue $\mu=\lambda-d$. Since $Q$ has rank $k$ and the vector $e$ is in the  space spanned by the rows of $Q$, then $Q(I-\tfrac 1nE)$ has rank $k-1$. It follows that there are at most $k-1$ positive eigenvalues of $K$, that is, at most $k-1$ eigenvalues of $H_1$ such that $\lambda>d$. Furthermore, there are precisely $k-1$ positive eigenvalues, provided $\Sigma+\mu\mu^T$ is positive definite (Lemma~\ref{lem:rankk}). The remaining eigenvalues of $K$ are zero, that is, the corresponding eigenvalues of $H_1$ are $\lambda=d$.

Assume $\Sigma+\mu\mu^T$ is positive definite, then by the above argument there precisely are $k-1$ eigenvalues of $H_1$ such that $\lambda>d$ with corresponding orthogonal eigenvectors $v_1,\ldots,v_{k-1}$. It follows from \eqref{eq:dv} that $v_1,\ldots,v_{k-1}$ are in the space   spanned by the rows of $Q(I-\tfrac 1nE)$, hence the eigenvectors are in the space spanned by the rows   of $Q$. By assumption $e$ is also in that row span. Hence,  $v_1,\ldots,v_{k-1},e$ forms an orthogonal basis of the row span of $Q$, as $Q$ has rank $k$. Thus, $P_k=P$.

\section*{Appendix  B}%\label{sec:proofs}

\begin{theorem}\label{thm:convergenceofprojection} 
    Let $A_m$ be a sequence of symmetric $n\x n$-matrices that converges to a symmetric $n\x n$-matrix $A$ in the Frobenius norm, that is $\|A_m-A\|_F\to 0$, as $m\to\infty$. Let $\lambda_1\ge \ldots\ge \lambda_n$ be the eigenvalues of $A$ (with multiplicity, and not necessarily non-negative). Let $k\le n$ be given and assume either $k=n$ or $\lambda_k>\lambda_{k+1}$. Furthermore, let $e_1,\ldots,e_k$ be  orthogonal eigenvectors corresponding to the eigenvalues $\lambda_1,\ldots,\lambda_k$, respectively, and let $f_{m,1},\ldots,f_{m,k}$ be  orthogonal  eigenvectors corresponding to the $k$ largest eigenvalues of $A_m$ (with multiplicity). Then, the orthogonal projection onto the span of $f_{m,1},\ldots,f_{m,k}$ converges to the orthogonal projection onto the span of $e_1,\ldots,e_k$ in the Frobenius norm. That is,   define $E_k=(e_1,\ldots,e_k)$ and $F_{m,k}=(f_{m,1},\ldots,f_{m,k})$, then $\|F_{m,k}F_{m,k}^T - E_kE_k^T\|_F\to 0$ as $m\to\infty$. 
\end{theorem}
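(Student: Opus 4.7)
The plan is a compactness-plus-identification argument. Write $P_m = F_{m,k} F_{m,k}^T$ and $P^* = E_k E_k^T$; the goal is $P_m \to P^*$ in Frobenius norm. First, since the Frobenius norm dominates the operator norm, Weyl's inequality yields $|\lambda_i(A_m) - \lambda_i(A)| \le \|A_m - A\|_F \to 0$ for every $i = 1, \ldots, n$. When $k < n$, the strict gap $\lambda_k > \lambda_{k+1}$ therefore persists for all large $m$, so the top-$k$ spectral projection $P_m$ of $A_m$ is well defined and unique. The case $k = n$ is trivial, as both projections equal the identity.

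Next I would apply subsequence compactness. The projections $\{P_m\}$ are bounded in $\R^{n \times n}$, for instance $\|P_m\|_F = \sqrt{k}$, so every subsequence has a further subsequence converging in Frobenius norm to some matrix $P^\infty$. By continuity of transposition, matrix multiplication, and trace with respect to $\|\cdot\|_F$, the limit satisfies $(P^\infty)^T = P^\infty$, $(P^\infty)^2 = P^\infty$, and $\operatorname{tr}(P^\infty) = k$; consequently $P^\infty$ is an orthogonal projection of rank $k$.

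The identification $P^\infty = P^*$ rests on the commutation identity $A_m P_m = P_m A_m$, which holds because $P_m$ is a spectral projection of $A_m$. Passing to the limit along the chosen sub-subsequence yields $A P^\infty = P^\infty A$, so the range of $P^\infty$ is $A$-invariant and is therefore spanned by $k$ eigenvectors of $A$, with associated eigenvalues $\mu_1, \ldots, \mu_k$ counted with multiplicity. On the other hand, by the first paragraph $\operatorname{tr}(A_m P_m) = \sum_{i=1}^k \lambda_i(A_m) \to \sum_{i=1}^k \lambda_i(A)$, while also $\operatorname{tr}(A_m P_m) \to \operatorname{tr}(A P^\infty) = \sum_{j=1}^k \mu_j$ by Frobenius-norm continuity. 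Hence $\sum_j \mu_j = \sum_{i=1}^k \lambda_i(A)$, and the strict gap $\lambda_k > \lambda_{k+1}$ forces $\{\mu_1, \ldots, \mu_k\} = \{\lambda_1, \ldots, \lambda_k\}$ as multisets. The range of $P^\infty$ is thus exactly the top-$k$ eigenspace of $A$, whose associated orthogonal projection is the unique matrix $E_k E_k^T = P^*$. Since every subsequence of $P_m$ has a further subsequence converging to the same $P^*$, the full sequence converges to $P^*$.

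The only delicate point, and what I expect to be the main obstacle, is handling possible eigenvalue multiplicities: when $\lambda_j = \lambda_{j+1}$ for some $j < k$, the individual eigenvectors $e_i$ are not uniquely defined, so one cannot argue $f_{m,i} \to e_i$ columnwise. The trace-plus-gap step above circumvents this by identifying the projection onto the top-$k$ eigenspace as a subspace, independent of any choice of basis within that eigenspace, which is why the conclusion is stated at the level of projections rather than eigenvectors.
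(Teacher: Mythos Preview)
Your proof is correct and takes a genuinely different route from the paper's. The paper proceeds by an explicit change-of-basis computation: it extends both eigenbases to full orthonormal frames and writes $E_n = F_{m,n}T_m$ with $T_m$ orthogonal, then derives the identity $\|A-A_m\|_F^2 = \sum_{i,j} T_{m,ij}^2(\lambda_j-\mu_{m,i})^2$; the eigenvalue gap forces the cross-block entries $T_{m,ij}$ (with $i\le k<j$ or $j\le k<i$) to vanish, and an explicit expansion of $\|E_kE_k^T-F_{m,k}F_{m,k}^T\|_F^2$ in these entries then gives the result. Your compactness-plus-identification argument avoids all of this bookkeeping: you extract subsequential limits, use $A_mP_m=P_mA_m$ to show the limiting range is $A$-invariant, and then pin down the subspace via the trace identity $\operatorname{tr}(A_mP_m)=\sum_{i\le k}\lambda_i(A_m)$ together with the fact that, under the gap $\lambda_k>\lambda_{k+1}$, the multiset $\{\lambda_1,\ldots,\lambda_k\}$ is the unique size-$k$ sub-multiset of the spectrum achieving the maximal sum. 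The paper's approach is more elementary (no appeal to Weyl's inequality or Ky Fan--type reasoning) and is in principle quantitative, since the displayed identity could yield a rate; your subsequence argument is cleaner and more conceptual but inherently non-quantitative. Both correctly sidestep the non-uniqueness of individual eigenvectors under multiplicities by working at the level of the projection.
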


\begin{proof}
If $k=n$, then $E_nE_n^T=I$ and $F_{m,n}F_{m,n}^T=I$, and the statement is trivial. Hence, assume   $k<n$. 
 Let $e_1,\ldots,e_n$ be   eigenvectors of $A$ corresponding to  eigenvalues $\lambda_1,\ldots,\lambda_n$, respectively. Let $f_{m,1},\ldots,f_{m,n}$ be the  eigenvectors of $A_m$   corresponding to the eigenvalues $\mu_{m,1}\ge \ldots\ge \mu_{m,n}$.  All eigenvectors can be asssumed to be orthonormal.    

As $\|A-A_m\|_F^2\to 0$ for $m\to\infty$, then every entry of $A_m$ converges to the corresponding entry of $A$. Consequently, the characteristic function of $A_m$ converges to that of $A$, and the eigenvalues of $A_m$ converges to those of $A$, that is, $\mu_{m,j}\to \lambda_j$ for $j=1,\ldots,n$, and $m\to\infty$.
Let  $T_m$ be such that $E_n=F_{m,n}T_m$. As $E_n$ and $F_{m,n}$ are orthogonal matrices, hence also $T_m$ is orthogonal. Applying Lemma~\ref{lem:multiplyingwithorthogonalmatrixdoesnotchangetheFrobeniusnorm} in the first and third line gives 
\begin{align*}
    \|A-A_m\|_F^2&= \|AE-A_mE_n\|_F^2= \|E\diag(\lambda_1,\ldots,\lambda_n) - A_mF_{m,n}T_m\|_F^2\\
    &= \|F_{m,n}T_m\diag(\lambda_1,\ldots,\lambda_n) - F_{m,n}\diag(\mu_{m,1},\ldots,\mu_{m,n})T_m\|_F^2\\
    &= \|T_m\diag(\lambda_1,\ldots,\lambda_n) - \diag(\mu_{m,1},\ldots,\mu_{m,n})T_m\|_F^2\\
    & =\sum_{i=1}^n \sum_{j=1}^n (\lambda_j T_{m,ij} -  \mu_{m,i}T_{m,ij} )^2
    = \sum_{i=1}^n \sum_{j=1}^n T_{m,ij}^2(\lambda_j  -  \mu_{m,i} )^2.
\end{align*}
By assumption, $\lambda_k>\lambda_{k+1}$. Hence, by convergence of eigenvalues,  for $j\le k$, $i\ge k+1$, or $j\ge k+1$, $i\le k$, we have $T_{m,ij}\to 0$ as $m\to\infty$.

Furthermore,
\begin{align*}
    E_kE_k^T - F_{m,k}F_{m,k}^T   & =\sum_{\ell=1}^{k} \rh{e_\ell e_\ell^T - f_{m,\ell}f_{m,\ell}^T } \\
     & =\sum_{\ell=1}^{k} \Big(\Big(\sum_{a=1}^n f_{m,a}T_{m,a\ell} \Big) \Big(\sum_{a=1}^n f_{m,a}T_{m,a\ell} \Big)^{\!\!T} - f_{m,\ell}f_{m,\ell}^T \Big) \\
      & =\sum_{\ell=1}^{k} \rh{\sum_{a=1}^n\sum_{b=1}^n T_{m,a\ell}T_{m,b\ell}f_{m,a}f_{m,b}^T  - f_{m,\ell}f_{m,\ell}^T } \\
   %   & =\sum_{\ell=1}^{k} \sum_{a=1}^n\sum_{b=1}^n T_{m,a\ell}T_{m,b\ell}f_{m,a}f_{m,b}^T  - \sum_{\ell=1}^{k}  f_{m,\ell}f_{m,\ell}^T\\
     &  =\sum_{a=1}^n\sum_{b=1}^n \sum_{\ell=1}^{k} T_{m,a\ell}T_{m,b\ell}f_{m,a}f_{m,b}^T  - \sum_{\ell=1}^{k}  f_{m,\ell}f_{m,\ell}^T.\\
     &=\sum_{(a,b)\in \{1,\ldots,n\}^2\setminus A_{1,k}} \Big(\sum_{i=1}^{k} T_{m,ai}T_{m,bi}\Big) f_{m,a}f_{m,b}^T \\
     &\quad +\sum_{(a,a)\in A_{1,k}} \Big(\sum_{i=1}^{k} T_{m,ai}T_{m,ai}-1\Big) f_{m,a}f_{m,a}^T,
  %   &=\sum_{a,b:(a,b)\not=(\ell,\ell),\ell=1,\ldots,k} \Big(\sum_{i=1}^{k} T_{m,ai}T_{m,bi}\Big) f_{m,a}f_{m,b}^T \\
    % &\quad +\sum_{a,b:(a,b)=(\ell,\ell),\ell=1,\ldots,k} \Big(\sum_{i=1}^{k} T_{m,ai}T_{m,bi}-1\Big) f_{m,a}f_{m,b}^T,
      \end{align*}
where $A_{i,j}=\{(a,a): i\le a \le j\}$.

From Lemma~\ref{lem:perpendicularmatricesinFrobeniusnormfromperendicularvectors}, we have $f_{m,a}f_{m,b}^T\perp f_{m,c}f_{m,d}^T$ for $(a,b)\neq (c,d)$ in the Frobenius inner product. Moreover, $\|f_{m,a}f_{m,b}^T\|_F=1$ for all $a,b$.  Hence, 
\begin{align}
   & \|E_kE_k^T-F_{m,k}F_{m,k}^T\|_F^2   \nonumber  \\
    &=\sum_{(a,b)\in \{1,\ldots,n\}^2\setminus A_{1,k}} \Big(\sum_{i=1}^{k} T_{m,ai}T_{m,bi}\Big)^{\!\!2} %\nonumber \\
   %  &\quad 
+\sum_{(a,a)\in A_{1,k}} \Big(\sum_{i=1}^{k} T_{m,ai}T_{m,ai}-1\Big)^{\!\!2}. \nonumber \\
       &=\sum_{(a,b)\in \{1,\ldots,n\}^2\setminus A_{1,n}} \Big(\sum_{i=1}^{k} T_{m,ai}T_{m,bi}\Big)^{\!\!2} +\sum_{(a,a)\in A_{k+1,n}} \Big(\sum_{i=1}^{k} T_{m,ai}T_{m,ai}\Big)^{\!\!2} \nonumber  \\
     &\quad  +\sum_{(a,a)\in A_{1,k}} \Big(\sum_{i=1}^{k} T_{m,ai}T_{m,ai}-1\Big)^{\!\!2}.\label{eq:diff}
 \end{align}
 
As noted above, $T_{m,ij}\to 0$ as $m\to\infty$ for $j\le k$, $i\ge k+1$, or $j\ge k+1$, $i\le k$. Using this and orthogonality of $T_m$ gives
$$\sum_{i=1}^{k}T_{m,a i}T_{m,b i} =\sum_{i=1}^{n}T_{m,a i}T_{m,b i} -\sum_{i=k+1}^{n}T_{m,a i}T_{m,b i} \to\begin{cases}
             0 & \text{ if } a\neq b,\\
             1 & \text{ if }a=b, \end{cases}.$$
Inserting  into \eqref{eq:diff} results in $ \|E_kE_k^T-F_{m,k}F_{m,k}^T\|_F^2 \to 0,$ as $m\to\infty$. 
\end{proof}

\begin{lemma}\label{lem:multiplyingwithorthogonalmatrixdoesnotchangetheFrobeniusnorm}
    Let $A$ be an $a\x b$ matrix. Let $U$ be a $b\x b$ orthogonal matrix and $V$ an $a\x a$ orthogonal matrix. Then,
    \begin{align*}
        \|A\|_F=\|VA\|_F=\|AU\|_F=\|VAU\|_F. 
    \end{align*}
\end{lemma}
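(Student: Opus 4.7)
The plan is to reduce all four quantities to the common expression $\text{tr}(A^T A)$ via the identity $\|M\|_F^2=\text{tr}(M^T M)$, combined with cyclicity of the trace and the defining relations of orthogonal matrices, namely $V^T V=V V^T=I_a$ and $U^T U=U U^T=I_b$. This is pure linear algebra and requires no auxiliary result from the paper.

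First I would dispatch the left-multiplication case: expand $\|VA\|_F^2=\text{tr}((VA)^T(VA))=\text{tr}(A^T V^T V A)$ and collapse $V^T V=I_a$ to obtain $\text{tr}(A^T A)=\|A\|_F^2$. Next I would handle right-multiplication by writing $\|AU\|_F^2=\text{tr}((AU)^T(AU))=\text{tr}(U^T A^T A U)$ and applying cyclicity of the trace to cycle the outer $U^T$ to the right, yielding $\text{tr}(A^T A U U^T)=\text{tr}(A^T A)=\|A\|_F^2$.

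The two-sided case $\|VAU\|_F=\|A\|_F$ then follows by chaining: apply the left-multiplication identity to the matrix $AU$ to get $\|V(AU)\|_F=\|AU\|_F$, and then the right-multiplication identity to reduce $\|AU\|_F$ to $\|A\|_F$. There is no real obstacle here; the only point of care is that the identity matrices arising from $V^T V$ and $U U^T$ have different sizes ($a$ and $b$, respectively), so one must track the dimensions through the trace manipulation to ensure each cancellation is legitimate.
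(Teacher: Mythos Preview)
Your argument is correct and complete: the trace identity $\|M\|_F^2=\operatorname{tr}(M^TM)$ together with $V^TV=I_a$, $UU^T=I_b$, and cyclicity of the trace handles all three equalities exactly as you describe, and your remark about tracking the sizes of the identity matrices is the only place one could slip.

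As for comparison with the paper: there is nothing to compare. The paper's proof of this lemma consists solely of a citation to a linear algebra reference and does not spell out any argument. Your self-contained trace computation is the standard proof one would expect to find behind that citation, so in effect you have supplied what the paper omits.
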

\begin{proof} See \cite{matrix}.
\end{proof}

\begin{lemma}\label{lem:perpendicularmatricesinFrobeniusnormfromperendicularvectors} 
    Let $w,x,y,z\in\re^b$. For $a\x b$-matrices $A$ and $B$, let $\inpr{A,B}_F=\sum_{i=1}^a\sum_{j=1}^b A_{ij}B_{ij}$ be the Frobenius inner product of $A$ and $B$, and let $\inpr{\sdot,\sdot}$ be the standard inner product on $\re^b$. Then, $\inpr{wx^T,yz^T}_F=\inpr{w,y}\inpr{x,z}$. In particular, $\|wx^T\|_F=\|w\|_2\|x\|_2$ and $wx^T\perp yz^T$ if $w\perp y$ or $x\perp z$. 
\end{lemma}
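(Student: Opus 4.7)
The plan is to prove the identity $\inpr{wx^T,yz^T}_F=\inpr{w,y}\inpr{x,z}$ by direct computation from the definitions, and then read off both the norm identity and the orthogonality statement as immediate corollaries.

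First, I would expand the Frobenius inner product via its definition, writing $\inpr{wx^T,yz^T}_F=\sum_{i}\sum_{j}(wx^T)_{ij}(yz^T)_{ij}$. Substituting the explicit entries of the two outer products gives $(wx^T)_{ij}=w_ix_j$ and $(yz^T)_{ij}=y_iz_j$, so the double sum becomes $\sum_{i}\sum_{j} w_iy_i\,x_jz_j$. Because the summand factors as a function of $i$ times a function of $j$, the double sum factors as $\bigl(\sum_i w_iy_i\bigr)\bigl(\sum_j x_jz_j\bigr)=\inpr{w,y}\inpr{x,z}$, which is the claimed identity.

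For the two corollaries, setting $(y,z)=(w,x)$ in the identity yields $\|wx^T\|_F^2=\inpr{w,w}\inpr{x,x}=\|w\|_2^2\|x\|_2^2$, and the norm identity $\|wx^T\|_F=\|w\|_2\|x\|_2$ follows by taking a square root. The orthogonality claim is equally direct: if either $w\perp y$ or $x\perp z$, then one of the two scalar factors on the right-hand side of the identity vanishes, forcing $\inpr{wx^T,yz^T}_F=0$, i.e., $wx^T\perp yz^T$ in the Frobenius inner product.

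There is essentially no serious obstacle; the whole argument rests on the definition of an outer product together with the observation that a double sum of a separable summand factors as a product of single sums. The only mild care needed is in bookkeeping of indices: even though $w,x,y,z$ lie in the same Euclidean space, the indices $i$ and $j$ in the Frobenius sum run over the row and column axes independently, so the factorization $w_iy_i\cdot x_jz_j$ must be carried out before the two inner sums can be recognized as standard inner products.
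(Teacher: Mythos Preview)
Your proof is correct and follows essentially the same approach as the paper: expand the Frobenius inner product using $(wx^T)_{ij}=w_ix_j$, factor the separable double sum into $\inpr{w,y}\inpr{x,z}$, and then read off the norm identity and the orthogonality statement as immediate special cases.
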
  
\begin{proof}
    Note that 
    \begin{align*}
        \inpr{wx^T,yz^T} = \sum_{i=1}^b\sum_{j=1}^b w_ix_j y_iz_j=\sum_{i=1}^b w_iy_i\sum_{j=1}^b x_jz_j=\inpr{w,y}\inpr{x,z}. 
    \end{align*}
    Hence, if either $w\perp y$ or $x\perp z$, then $wx^T\perp yz^T$, and $\|wx^T\|_F^2=\inpr{wx^T,wx^T}=\inpr{w,w}\inpr{x,x}=\|w\|_2^2\|x\|_2^2$, such that $\|wx^T\|_F=\|w\|_2\|x\|_2$. 
\end{proof}

\begin{lemma}\label{lem:normprojectionmatrix}
    Let $v_1,\ldots,v_\ell$ be linearly independent vectors. An orthogonal projection matrix on $\sp(v_1,\ldots,v_\ell)$ has Frobenius norm $\sqrt \ell$. 
\end{lemma}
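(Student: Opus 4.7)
The plan is to reduce the claim to an orthonormal basis of the subspace and then invoke Lemma~\ref{lem:perpendicularmatricesinFrobeniusnormfromperendicularvectors}, which has already been proved in Appendix B.

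First, I would replace $v_1,\ldots,v_\ell$ by an orthonormal basis $u_1,\ldots,u_\ell$ of $\text{span}(v_1,\ldots,v_\ell)$ (obtained, for example, by Gram--Schmidt); since the orthogonal projection onto a subspace is uniquely determined by the subspace itself, the projection matrix does not depend on the choice of basis, so working with an orthonormal basis is without loss of generality. Writing $U=(u_1,\ldots,u_\ell)$, the orthogonal projection onto the span takes the particularly simple form
$$P = UU^T = \sum_{i=1}^\ell u_i u_i^T.$$

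Next, I would expand the squared Frobenius norm using the Frobenius inner product, $\|P\|_F^2=\inpr{P,P}_F=\sum_{i=1}^\ell\sum_{j=1}^\ell \inpr{u_iu_i^T,u_ju_j^T}_F$. By Lemma~\ref{lem:perpendicularmatricesinFrobeniusnormfromperendicularvectors}, $\inpr{u_iu_i^T,u_ju_j^T}_F=\inpr{u_i,u_j}^2$, which equals $0$ for $i\neq j$ by orthogonality of the basis and equals $1$ for $i=j$ since each $u_i$ is a unit vector. Hence only the diagonal terms survive, giving $\|P\|_F^2=\ell$ and therefore $\|P\|_F=\sqrt{\ell}$.

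There is no real obstacle: the only substantive choice is to pass to an orthonormal basis so that Lemma~\ref{lem:perpendicularmatricesinFrobeniusnormfromperendicularvectors} applies cleanly. As a sanity check, one can note the equivalent route $\|P\|_F^2=\mathrm{tr}(P^TP)=\mathrm{tr}(P^2)=\mathrm{tr}(P)=\ell$, using $P^T=P$ and $P^2=P$ together with the fact that an orthogonal projection onto an $\ell$-dimensional subspace has $\ell$ eigenvalues equal to one and the remaining eigenvalues equal to zero; this gives the same answer and confirms that the result is independent of the ambient dimension.
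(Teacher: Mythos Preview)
Your proof is correct and follows essentially the same approach as the paper: pass to an orthonormal basis, write $P=\sum_{i=1}^\ell u_iu_i^T$, and use Lemma~\ref{lem:perpendicularmatricesinFrobeniusnormfromperendicularvectors} to see that the cross terms vanish and each $\|u_iu_i^T\|_F^2=1$. The trace-based sanity check you add is a nice alternative justification but is not in the paper.
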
  
\begin{proof}
    We may assume that $v_1,\ldots,v_\ell$ are orthonormal. Then, we can write the projection matrix as $P=v_1v_1^T+\ldots+v_\ell v_\ell^T$. By Lemma~\ref{lem:perpendicularmatricesinFrobeniusnormfromperendicularvectors}, $v_iv_i^T\perp_F v_jv_j^T$ for $i\neq j$. So, again by Lemma~\ref{lem:perpendicularmatricesinFrobeniusnormfromperendicularvectors},  $\|P\|_F^2 = \|v_1v_1^T\|_F^2+\ldots+\|v_\ell v_\ell^T\|_F^2 = \ell$. 
\end{proof}

\begin{lemma}\label{lem:rankk}
  Let $A$ be an $a \times b$ matrix and $B$ an $b\times c$ matrix, both of rank $b$, such that $a,c\ge b$. Let $C = AB.$ Then, $C$ is of rank $b$, and the row space of $C$ coincides with the row space of $B$.
\end{lemma}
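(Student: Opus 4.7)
The plan is to prove the two assertions in one stroke by showing that the row space of $C$ and the row space of $B$ are equal subspaces of $\R^c$, from which rank equality follows immediately since $\dim(\text{row space of } B) = \text{rank}(B) = b$.

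First I would establish the easy inclusion: the row space of $C = AB$ is contained in the row space of $B$. This is immediate from the row-by-row description $C_{i\star} = A_{i\star} B = \sum_{j=1}^{b} A_{ij} B_{j\star}$, which expresses each row of $C$ as an explicit linear combination of the rows of $B$.

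For the reverse inclusion, I would use the fact that $A$ has full column rank. Since $A$ is $a \times b$ with $a \ge b$ and $\text{rank}(A) = b$, the Gram matrix $A^T A$ is a $b \times b$ symmetric positive definite matrix, hence invertible. Left-multiplying $C = AB$ by $(A^T A)^{-1} A^T$ yields
\[
(A^T A)^{-1} A^T C \;=\; (A^T A)^{-1} A^T A B \;=\; B,
\]
so each row of $B$ is a linear combination of the rows of $C$, giving the inclusion of row spaces in the other direction.

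Combining the two inclusions, the row spaces coincide. Since $B$ has rank $b$, its row space has dimension $b$, and therefore so does the row space of $C$, giving $\text{rank}(C) = b$. The main thing to be careful about is invoking the right hypothesis at the right time: full column rank of $A$ (guaranteed by $\text{rank}(A) = b$ with $a \ge b$) is what allows the left-inverse construction $(A^T A)^{-1} A^T$, while the $c \ge b$ hypothesis on $B$ is implicitly needed only to ensure that $\text{rank}(B) = b$ is consistent (a $b \times c$ matrix with $c < b$ could not have rank $b$). No separate argument for the row space of $B$ itself is needed beyond the standard identification of row space dimension with rank.
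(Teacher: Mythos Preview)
Your proof is correct. The paper takes a slightly different route: it first establishes $\text{rank}(C)=b$ by extracting $b$ linearly independent rows of $A$ and $b$ linearly independent columns of $B$ to form invertible $b\times b$ submatrices $\widetilde A$ and $\widetilde B$, so that $\widetilde A\widetilde B$ is an invertible $b\times b$ submatrix of $C$; it then uses invertibility of $\widetilde A$ to argue that the rows of $\widetilde A B$ (which are a subset of the rows of $AB$) already span the row space of $B$. Your argument instead builds the algebraic left inverse $(A^TA)^{-1}A^T$ in one line and recovers $B$ directly from $C$, which is cleaner and avoids any row/column selection. The only trade-off is that your use of positive definiteness of $A^TA$ tacitly relies on working over $\mathbb{R}$, whereas the paper's submatrix argument is field-agnostic; in the context of this paper that is of course no restriction.
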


\begin{proof}First we show that $\text{rank}(C)=b$. Note that $A$ has $b$ linearly independent rows $1\le i_1 <\ldots< i_b\le b$, and $B$ has $b$ linearly independent columns $1\le j_1<\ldots<j_b\le b$. Let $\widetilde A$ and $\widetilde B$ be the $b\times b$ matrices with   $\widetilde A_{cd} = A_{i_cd}$ and $\widetilde B_{cd} = B_{c j_d}$. Then $\widetilde A$ and $\widetilde B$ are invertible matrices. Hence, also $\widetilde C=\widetilde A\widetilde B = (C_{i_aj_b})_{a,b}$ is invertible and has rank $k$. It follows that $C$ has rank $k$. 
As $\widetilde A$ is invertible, then the span of the rows of  $\widetilde A B$ is equal to the span of the rows of $B$. That is, the span of the rows of $ AB$ is equal to the span of the rows of $B$. 
\end{proof}

\bibliography{Bieb}

\end{document}